\renewcommand{\cite}[1]{\citep{#1}}
\newtheorem{lemma}{Lemma}
\newtheorem{proposition}[lemma]{Proposition}
\newtheorem{theorem}[lemma]{Theorem}
\newtheorem{corollary}[lemma]{Corollary}
\theoremstyle{definition}
\newtheorem{example_}[lemma]{Example}
\newcommand{\pinothing}{\perp}
\newlength\figurewidth
\newlength\figureheight
\newenvironment{NoHyper}{}{}
\newenvironment{talign*}
 {\csname align*\endcsname}
 {\endalign}
\title{Bayesian Persuasion in Sequential Decision-Making\thanks{Work done when Jiarui Gan was affiliated with Max Planck Institute for Software Systems.}}
\author[1]{Jiarui Gan}
\author[2]{Rupak Majumdar}
\author[2]{Goran Radanovic}
\author[2]{Adish Singla}
\affil[1]{University of Oxford}
\affil[2]{Max Planck Institute for Software Systems}
\date{}
\begin{document}

\maketitle

\begin{abstract}
We study a dynamic model of Bayesian persuasion in sequential decision-making settings. An informed principal observes an  external parameter of the world and advises an uninformed agent about actions to take over time. The agent takes actions in each time step based on the current state, the principal's advice/signal, and beliefs about the external parameter. The action of the agent updates the state according to a stochastic process. The model arises naturally in many applications, e.g., an app (the principal) can advice the user (the agent) on possible choices between actions based on additional real-time information the app has.
We study the problem of designing a signaling strategy from the principal's point of view. We show that the principal has an optimal strategy against a myopic agent, who only optimizes their rewards locally, and the optimal strategy can be computed in polynomial time. In contrast, it is NP-hard to approximate an optimal policy against a far-sighted agent. Further, if the principal has the power to threaten the agent by not providing future signals, then we can efficiently compute a threat-based strategy. This strategy guarantees the principal's payoff as if playing against an agent who is far-sighted but myopic to future signals.
\end{abstract}

\section{Introduction}

Uncertainty is prevalent in models of sequential decision making. 
Usually, an agent relies on prior knowledge and Bayesian updates as a basic approach to dealing with uncertainties. 
In many scenarios, a knowledgeable \emph{principal} has direct access to external information and can reveal it to influence the agent's behavior.
For example, a navigation app (the principal) normally knows about the global traffic conditions and can inform a user (the agent), who then decides a particular route based on the app's advice.
The additional information can help improve the quality of the agent's decision-making. Meanwhile, by strategically revealing the external information, the principal can also persuade the agent to act in a way beneficial to the principal.

We study the related persuasion problem in a dynamic environment.
In a static setting, the interaction between the principal and the agent is modeled by \emph{Bayesian persuasion} \cite{kamenica2011bayesian}, where the principal uses their information advantage to influence the agent's strategy in a one-shot game, by way of signaling.
In this paper, we extend this setting to include interaction in an infinite-horizon Markov decision process (MDP), where rewards incurred depend on the state of the environment, the action performed, as well as an external parameter sampled from a known prior distribution at each step.
The principal, who cannot directly influence the state, observes the realization of this external parameter and signals the agent about their observation. 
The agent chooses to perform an action based on the state and the signal, and the action updates the state according to a stochastic transition function.
Both the principal and the agent aim to optimize their own rewards received in the course of the play.

If the objectives of the principal and the agent are completely aligned, the principal should reveal true information about the external parameter, so the more interesting case is when they are misaligned.
For example, a user of an navigation app only wants to optimize their commute times but the app may want to incentivize the user to upgrade to a better service, or to increase traffic throughput when the app is provided by a social planner.
We consider two major types of agents---\emph{myopic} and \emph{far-sighted}---and investigate the problem of optimal signaling strategy design against them.
A myopic agent optimizes their payoff locally: in each step,
they take an action that will give them the highest immediate reward. 
It can model a large number of ``short-lived'' agents each appearing instantly in a system 
(e.g., users of a ride-sharing app or an E-commerce website).
A far-sighted agent, on the other hand, optimizes their long-run cumulative reward and considers future information disclosure.

We show that, in the myopic setting, an optimal signaling strategy for the principal can be computed in polynomial time through a reduction to linear programming. 
On the other hand, in the case of a far-sighted agent, optimal signaling strategy design becomes computationally intractable: 
if P$\neq$NP, there exists no polynomial time approximation scheme.
Our proof of computational intractability is quite general, and extends to showing the hardness of similar principal-agent problems in dynamic settings.

To work around the computational barrier, we focus on a special type of far-sighted agents who are \emph{advice-myopic}. An advice-myopic agent optimizes their cumulative reward over time based on the history of information disclosures, but does not assume that the principal will continue to provide information in the future.
We expect such behavior to be a natural heuristic in the real world when agents resort to prior knowledge, but not future information disclosure, to estimate future rewards.
We then show that optimal signaling strategies can again be computed in polynomial-time.
More interestingly, the solution can be used to design a threat-based signaling strategy against a far-sighted agent. We show that this threat-based strategy induces the same reaction from a far-sighted agent as from an advice-myopic one. Hence, it guarantees the principal the same payoff obtained against an advice-myopic agent, when the agent is actually far-sighted.
Figure~\ref{fig:mdp-example} shows the subtleties of optimal signaling strategies in the dynamic setting.

\begin{figure}[t]
%\vspace{-3mm}
\centering
\tikzset{
->, % makes the edges directed
>={Stealth[scale=1.2]}, % makes the arrow heads bold
node distance=40, % specifies the minimum distance between two nodes. Change if necessary.
every state/.style={thick, fill=gray!10,minimum size=20}, % sets the properties for each ’state’ node
every edge/.append style={semithick},
%initial text= , % sets the text that appears on the start arrow
}
%\hspace{-5mm}
\begin{tikzpicture}[baseline={(current bounding box.center)}]
\tikzstyle{every node}=[font=\small] 

\node[state, initial, initial where=left, initial text = ] (s0) {$s_0$};

\node[state, right of=s0, xshift=20,yshift=40] (s1) {$s_1$};
\node[state, accepting, right of=s0, xshift=45] (s2) {$s_2$};

\draw 
(s0) edge[thick, dashed, bend left, above] node{$a$} (s1)
(s0) edge[thick, dashed, bend right, below] node{$b$} (s1)
(s1) edge[thick,out=140,in=100,looseness=1.3, above] node{$a$ $(0, 0)$} (s0)
(s1) edge[thick, bend left, right, looseness=1] node[pos=0.05]{$~b$}  node[pos=0.6]{$(0.1, 10)$} (s2)
(s0) edge[thick,out=-40,in=-140, looseness=0.9, below] node{$c$ $(0.1,0)$} (s2)
;
\end{tikzpicture}
\qquad
\begin{small}
\begin{tabular}{ crr } 
\toprule
  & \multicolumn{2}{c}{External parameter}\\
  & $\theta_a$ & $\theta_b$ \\
\midrule
$a$ & $(1, 1)$  & $(-1, 0)$ \\ 
$b$ & $(-1, 0)$ & $(1, 1)$ \\
\bottomrule
\end{tabular}
\end{small}
%\vspace{-2mm}
\caption{A simple example: a principal wishes to reach $s_2$ while maximizing rewards. 
All transitions are deterministic, and every edge is labeled with the corresponding action and (in the brackets) rewards for the agent and the principal, respectively. The rewards for state-action pairs $(s_0, a)$ and $(s_0, b)$ (dashed edges) also depend on the 2-valued external parameter, as specified in the table; the value of the parameter is sampled uniformly at random at each step. 
Assume uniform discounting with discount $\frac{1}{2}$. 
Without signaling, the agent will always take action $c$ in $s_0$, whereby the principal obtains payoff $0$. The principal can reveal information about the external parameter to attract the agent to move to $s_1$. If the agent is myopic, the principal can reveal full information, which leads to the agent moving to $s_1$, taking action $b$, and ending in $s_2$; the principal obtains payoff $6$ as a result. However, if the agent is far-sighted, this will not work: the agent will end up in a loop in $s_0$ and $s_1$, resulting in overall payoff $4/3$ for the principal. To improve, the principal can use a less informative strategy in $s_0$: e.g., advising the agent to take the more profitable action $10\%$ of the time and a uniformly sampled action in $\{a,b\}$ the remaining $90\%$ of the time. The agent will be incentivized to move to $s_1$ then. Alternatively, the principal can also use a threat-based strategy, which yields an even higher payoff in this instance: always reveal the true information in $s_0$, advise the agent to take $b$ in $s_1$, and stop providing any information if the agent does not follow the advice.
The outcome of this strategy coincides with how an advice-myopic agent behaves: they will choose $b$ at $s_1$ as future disclosures are not considered.
}
%\vspace{-3mm}
\label{fig:mdp-example}
\end{figure}
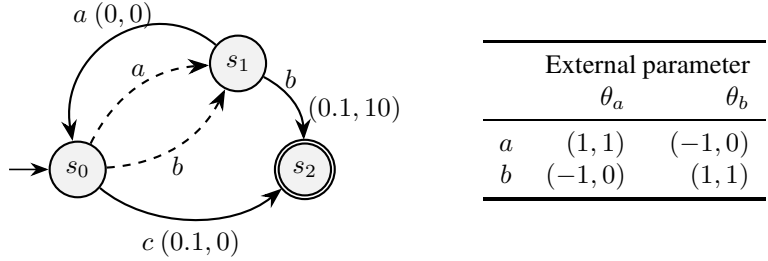

\subsection{Related Work}

Our starting point is the work on 
Bayesian persuasion \cite{kamenica2011bayesian}, which looks at optimal signaling under incomplete information in the \emph{static} case. Many variants of this model have been proposed and studied ever since, with applications in security, voting, advertising, finance, etc. \citep[e.g.,][]{rabinovich2015information,xu2015exploring,goldstein2018stress,badanidiyuru2018targeting,castiglioni2020persuading}; also see the comprehensive surveys \cite{KamenicaSurvey,Dughmi2017}.
Dynamic models of Bayesian persuasion were studied recently \cite{Ely2017,RenaultSolanVieille2017}, and some more recent works focused on algorithmic problems from several dynamic models, such as a model built on extensive-form games (EFGs) \cite{celli2020private} and an online persuasion model \cite{castiglioni2020online,castiglioni2021multireceiver}.
These models are sufficiently different from ours. In the EFG model, in particular, an EFG parameterized by the state of nature (akin to our external parameter) is instantiated before the play, and a group of receivers then engage in the EFG and they infer the EFG being played according to signals from a sender. Hence, information exchange happens only once in this model, whereas it happens in every step in ours. Such one-off persuasions also appeared in several other works on Bayesian persuasion and, more broadly, on non-cooperative IRL (inverse reinforcement learning) and incentive exploration \cite{zhang2019noncooperative,mansour2021bayesian,simchowitz2021exploration}.

%\subsubsection{Dynamic Mechanism Design}
Reversing the roles of the players in terms of who has the power to commit leads to a dual problem of Bayesian persuasion, which is often known as automated mechanism design \cite{conitzer2002complexity,conitzer2004self}.
In such problems, the signal receiver commits to a mechanism that specifies the action they will take upon receiving each signal, and the signal sender sends signals optimally in response. %to this mechanism.
A very recent work by \citet{zhang2021automated} considered automated mechanism design in a dynamic setting  similar to ours, and offered a complementary view to our work.
In their work, the primary consideration is a finite-horizon setting and history-based strategies.
In contrast, we focus primarily on unbounded horizons and memory-less strategies.

%\subsubsection{Stochastic Game}
The interaction between the principal and the agent can be viewed as a stochastic game~\cite{shapley1953stochastic} where one player (i.e., the principal) has the power to make a strategy commitment \cite{letchford2010computing,letchford2012computing}. 
Games where multiple agents jointly take actions in a dynamic environment have been widely studied in the literature on multi-agent reinforcement learning, but usually in settings without strategy commitment \cite{littman1994markov,bucsoniu2010multi}.

More broadly, our work also relates to the advice-based interaction framework \citep[e.g.,][]{torrey2013teaching,DBLP:conf/ijcai/AmirKKG16},  where the principal's goal is to communicate advice to an agent on how to act in the world. 
This advice-based framework is also in close relationship to the machine teaching literature~\cite{goldman1995complexity,DBLP:conf/icml/SinglaBBKK14,doliwa2014recursive,DBLP:journals/corr/abs-1801-05927,ng2000algorithms,hadfield2016cooperative} where the principal (i.e., the teacher) seeks to find an optimal training sequence to steer the agent (i.e., the learner) towards the desired goal. 
Similarly, in environment design, the principal modifies the rewards or transitions to steer the behavior of the agent. The objective may be obtaining fast convergence~\cite{DBLP:conf/icml/NgHR99,DBLP:conf/icml/Mataric94}, or inducing a target policy of the agent~\cite{DBLP:conf/aaai/ZhangP08,DBLP:conf/sigecom/ZhangPC09,DBLP:conf/nips/MaZSZ19,rakhsha2020policy,DBLP:conf/gamesec/HuangZ19a,DBLP:journals/corr/abs-2011-10824}.
These problem settings are similar to ours in that the principal cannot directly act in the environment but can influence the agent's actions via learning signals. We see our setting and techniques as complementary to these studies; in particular, our hardness results can be extended there as well.

\section{The Model}

Our formal model is an MDP with reward uncertainties, given by a tuple $\mathcal{M} = \left\langle S, A, P, \Theta, (\mu_s)_{s \in S}, R, \widetilde{R} \right\rangle$ and involving two players: a {\em principal} and an {\em agent}.
Similar to a standard MDP, $S$ is a finite state space of the environment; 
$A$ is a finite action space {\em for the agent}; 
$P : S \times A \times S \to [0, 1]$ is the transition dynamics of the state.
When the environment is in state $s$ and the agent takes action $a$, the state transitions to $s'$ with probability $P (s, a, s')$; both the principal and the agent are aware of the state throughout.
Meanwhile, rewards are generated for both the principal and the agent, and are specified by the reward functions $R: S \times \Theta \times A \to \mathbb{R}$ and $\widetilde{R}: S \times \Theta \times A \to \mathbb{R}$, respectively. 
Hence, unlike in a standard MDP, here the rewards also depend on an external parameter $\theta \in \Theta$. 
This parameter captures an additional layer of uncertainty of the environment; it follows a distribution $\mu_s \in \Delta(\Theta)$ and is drawn anew every time the state changes.
For all $s \in S$, $\mu_s$ is common prior knowledge shared between the principal and the agent; however, only the principal has access to the realization of $\theta$.

Crucially, since the actions are taken only by the agent, the principal cannot directly influence the state. Instead, the principal can use their information advantage about the external parameter to persuade the agent to take certain actions, by way of signaling.

\subsection{Signaling and Belief Update}

Let $G$ be a space of \emph{signals}. 
A signaling strategy of the principal generates a distribution over $G$.
Our primary consideration in this paper is Markovian signaling strategies, whereby signals to send only depend on the current state (independent of the history).
Formally, a signaling strategy $\pi = (\pi_s)_{s \in S}$ of the principal consists of a function $\pi_s: \Theta \to \Delta(G)$ for each state $s \in S$.
Upon observing an external parameter $\theta$, the principal will send a signal sampled from $\pi_s(\theta)$ when the current state is $s$; we denote by $\pi_s(\theta, g)$ the probability of $g\in G$ in this distribution.

The signal space is broadly construed.
For example, one simple signaling strategy is to always reveal the true information, which always sends a deterministic signal $g_\theta$ associated with the observed external parameter $\theta \in \Theta$ (i.e., a message saying ``The current external state is $\theta$''); formally, we write $\pi_s(\theta) = \mathbf{\hat {e}}_{g_\theta}$.\footnote{We let $\mathbf{\hat {e}}_i$ denote a unit vector, of which the $i$-th element is $1$.}
In contrast, if the same signal is sent irrespective of the external parameter, i.e., $\pi_s(\theta) = \pi_s(\theta')$ for all $\theta, \theta' \in \Theta$, then the signaling strategy is completely uninformative. 
Without loss of generality, we assume that signals in $G$ are distinct from each other from the agent's point of view.

Upon receiving a signal $g$, the agent updates their posterior belief about the (distribution of) the external parameter: the conditional probability of the parameter being $\theta$ is
\begin{equation}
\label{eq:posterior}
\textstyle
\Pr(\theta|g,\pi_s) = \frac{\mu_s(\theta) \cdot \pi_s(\theta, g)}{\sum_{\theta'\in \Theta} \mu_s(\theta') \cdot \pi_s(\theta', g)}.
\end{equation}
To derive the above posterior also relies on knowledge about the principal's signaling strategy $\pi$.
Indeed, we follow the Bayesian persuasion framework, whereby the principal commits to a signaling strategy $\pi$ at the beginning of the game and announces it to the agent.

\subsection{Signaling Strategy Optimization}

We take the principal's point of view and investigate the problem of optimal signaling strategy design:
given $\mathcal{M}$, find a signaling strategy $\pi$ that maximizes the principal's (discounted) cumulative reward 
$\mathbb{E} \left[ \sum_{t=0}^{\infty} \gamma^t R(s_t, \theta_t, a_t) | \mathbf{z}, \pi \right]$,
where $\mathbf{z} = (z_s)_{s \in S}$ is the distribution of the starting state, 
$\gamma \in [0,1)$ is a discount factor,
and the expectation is taken over the trajectory $(s_t, \theta_t, a_t)_{t=0}^\infty$ induced by the signaling strategy $\pi$.
To completely specify this task requires a behavioral model for the agent.
We will consider two major types of agents---{\em myopic} and {\em far-sighted}---and will define them separately in the next two sections; 
a myopic agent only cares about their instant reward in each step, whereas a far-sighted agent considers the cumulative reward with respect to a discount factor $\tilde{\gamma} > 0$ (which need not be equal to $\gamma$).

In summary, the game proceeds as follows.
At the beginning, the principal commits to a signaling strategy $\pi$ and announces it to the agent.
Then in each step, if the environment is in state $s$, an external parameter $\theta \sim \mu_s$ is drawn (by nature); the principal observes $\theta$, samples a signal $g \sim \pi_s(\theta)$, and sends $g$ to the agent.
The agent receives $g$, updates their belief about $\theta$ (according to \eqref{eq:posterior}), and decides an action $a \in A$ to take accordingly.
The state then transitions to $s' \sim P(s, a, \cdot)$.

\section{When Agent is Myopic}
\label{sc:opt-myopic}

We first consider a myopic agent.
A myopic agent aims to maximize their reward in each individual step.
Upon receiving a signal $g$ in state $s$, the agent will take a best action $a \in A$, which maximizes
$\mathbb{E}_{\theta \sim \Pr(\cdot|g,\pi_s)} \widetilde{R}(s,\theta, a)$.
We study the problem of computing an optimal signaling strategy against a myopic agent, termed {\sc OptSig-myop}.

\subsection{Action Advice}

According to a standard argument via the revelation principle, it is often without loss of generality to consider signaling strategies in the form of action advice.
This also holds in our model.
Specifically, for any signaling strategy, there exists an equivalent strategy $\pi$ which uses only a finite set $G_A := \{g_a : a \in A \}$ of signals, and each signal $g_a$ corresponds to an action $a \in A$;
moreover, $\pi$ is {\em incentive compatible} (IC), which means that the agent is also incentivized to take the corresponding action $a$ upon receiving $g_a$, i.e., we have $\mathbb{E}_{\theta \sim \Pr(\cdot|g_a,\pi_s)} \widetilde{R}(s,\theta, a) \ge \mathbb{E}_{\theta \sim \Pr(\cdot|g_a,\pi_s)} \widetilde{R}(s,\theta, a')$ for all $a'\in A$,\footnote{By convention, we assume that the agent breaks ties by taking the advised action when there are multiple optimal actions.} or equivalently:
\begin{align}
\label{eq:ic}
\sum_{\theta \in \Theta} \Pr(\theta|g_a,\pi_s) {\cdot} \left( \widetilde{R}(s,\theta, a) - \widetilde{R}(s,\theta, a') \right) \ge 0 
\quad \text{for all } a' \in A.
\end{align}
In other words, $\pi$ signals which action the agent should take and it is designed in a way such that the agent cannot be better off deviating from the advised action with respect to the posterior belief.
We call a signaling strategy that only uses signals in $G_A$ an {\em action advice}, and call it an {\em IC} action advice if it also satisfies \eqref{eq:ic}.
We refer the reader to the appendix for more details about the generality of IC action advices in our model.

We can easily characterize the outcome of an IC action advice $\pi$:
at each state $s$, since the agent is incentivized to follow the advice, with probability $\phi_s^\pi(\theta, a) := \mu_s(\theta) \cdot \pi_s(\theta, g_a)$ they will take action $a$ while the realized external parameter is $\theta$; hence, $\phi_s^\pi$ is a distribution over $\Theta \times A$.
We can then define the following set $\mathcal{A}_s  \subseteq \Delta(\Theta \times A)$, which contains all such distributions that can be induced by some $\pi$:
\[
\mathcal{A}_s  = \left\{ \phi_s^\pi : \pi \text{ is an IC action advice}\right\}.
\]
It would now be convenient to view the problem facing the principal as an (single-agent) MDP $\mathcal{M}^* = \left\langle S, (\mathcal{A}_s)_{s \in S}, P^*, R^* \right\rangle$, where
$S$ is the same state space in $\mathcal{M}$; 
$\mathcal{A}_s$ defines an (possibly infinite) action space for each $s$; 
the transition dynamics $P^*: S \times \Delta(\Theta \times A) \times S \to [0,1]$ and reward function $R^*: S \times \Delta(\Theta \times A) \to \mathbb{R}$ are such that 
\begin{align*}
P^*(s, \mathbf{x}, s') &= \mathbb{E}_{(\theta, a) \sim \mathbf{x}} P(s, a , s'),\\ 
\text{ and } \quad 
R^*(s, \mathbf{x}) \quad &= \mathbb{E}_{(\theta, a) \sim \mathbf{x}} R(s, \theta, a).
\end{align*}
Namely, $\mathcal{M}^*$ is defined as if the principal can choose actions (which are $(\theta, a)$ pairs) freely from $\mathcal{A}_s$, whereas the choice is actually realized through persuasion.
A policy $\sigma$ for $\mathcal{M}^*$ maps each state $s$ to an action $\mathbf{x} \in \mathcal{A}_s$, and it corresponds to an IC action advice $\pi$ in $\mathcal{M}$, with $\phi_s^\pi = \sigma(s)$ for all $s$.
The problem of designing an optimal action advice then translates to computing an optimal policy for $\mathcal{M}^*$.
We show next that we can exploit a standard approach to compute an optimal policy but we need to address a key challenge as the action space of $\mathcal{M}^*$ may contain infinitely many actions.

\subsection{LP Formulation}

The standard approach to computing an optimal policy for an MDP is to compute a value function $V:S \to \mathbb{R}$ that satisfies the Bellman equation:
\begin{align}
\label{eq:bellman}
V(s) = \max_{\mathbf{x} \in \mathcal{A}_s} \left[ R^*(s, \mathbf{x}) + \gamma \cdot \sum_{s' \in S} P^*(s, \mathbf{x}, s') \cdot V(s') \right] \quad \text{for all } s \in S.
\end{align}
It is well-known that there exists a unique solution to the above system of equations, from which an optimal policy can be extracted.
In particular, one approach to computing this unique solution is by using the following LP (linear program) formulation, where $V(s)$ are the variables;
The optimal value of this LP directly gives the cumulative reward of optimal policies under a given initial state distribution $\mathbf{z}$. 
\begin{align}
\min_{V}
\quad &  \sum_{s \in S} z_s \cdot V(s) \label{lp:opt-myop}
\\
\text{s.t.} \quad
&V(s) \ge R^*(s, \mathbf{x}) + \gamma \cdot \sum_{s' \in S} P^*(s, \mathbf{x}, s') \cdot V(s') && \text{for all } s \in S, \mathbf{x} \in \mathcal{A}_s \tag{\ref{lp:opt-myop}a}
\label{eq:opt-myop-cons-1}
\end{align}
The issue with this LP formulation is that there may be infinitely many constraints as \eqref{eq:opt-myop-cons-1} must hold for all $\mathbf{x} \in \mathcal{A}_s$.
This differs from MDPs with a finite action space, in which case the LP formulation can be reduced to one with a finite set of constraints, where each constraint corresponds to an action. 
We address this issue by using the {\em ellipsoid method} as sketched below. More practically, we can also derive a concise LP formulation by exploiting the duality principle (see the appendix). 

\begin{theorem}
{\sc OptSig-myop} is solvable in polynomial time.
\end{theorem}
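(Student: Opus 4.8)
The plan is to recognize the optimal-signaling problem against a myopic agent as exactly the linear program \eqref{lp:opt-myop} and to solve that LP in polynomial time despite its a priori infinite constraint set. The reduction to \eqref{lp:opt-myop} is already in hand from the preceding discussion: an optimal IC action advice corresponds to an optimal policy of $\mathcal{M}^*$, which is read off from the unique Bellman-optimal value function, which is the optimal solution of \eqref{lp:opt-myop}. So it remains to (i) solve \eqref{lp:opt-myop}, and (ii) recover the advice $\pi$ from its solution.

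First I would observe that, for each fixed state $s$, the set $\mathcal{A}_s \subseteq \Delta(\Theta \times A)$ is in fact a polytope cut out by polynomially many linear inequalities in the variables $\phi(\theta,a)$, $(\theta,a)\in\Theta\times A$. Writing $\phi(\theta,a) = \mu_s(\theta)\,\pi_s(\theta,g_a)$, being a valid action advice means $\phi(\theta,a)\ge 0$ and $\sum_{a\in A}\phi(\theta,a)=\mu_s(\theta)$ for every $\theta$; and substituting the posterior \eqref{eq:posterior}, so that $\Pr(\theta\mid g_a,\pi_s)=\phi(\theta,a)/\sum_{\theta'}\phi(\theta',a)$, into the IC condition \eqref{eq:ic} and clearing the nonnegative denominator turns it into the linear inequality $\sum_{\theta\in\Theta}\phi(\theta,a)\bigl(\widetilde R(s,\theta,a)-\widetilde R(s,\theta,a')\bigr)\ge 0$ for every pair $a,a'\in A$. (When action $a$ is never advised the denominator vanishes, but then this inequality reads $0\ge 0$ and is harmless, so the linearized system describes $\mathcal{A}_s$ exactly.) Equally important, the right-hand side of \eqref{eq:opt-myop-cons-1} is linear in $\mathbf{x}$: expanding $R^*$ and $P^*$ it equals $\sum_{(\theta,a)}\mathbf{x}(\theta,a)\bigl(R(s,\theta,a)+\gamma\sum_{s'}P(s,a,s')\,V(s')\bigr)$.

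These two facts give a polynomial-time separation oracle for \eqref{lp:opt-myop}. Given a candidate $V$, for each $s$ I maximize the linear functional $\mathbf{x}\mapsto\sum_{(\theta,a)}\mathbf{x}(\theta,a)\bigl(R(s,\theta,a)+\gamma\sum_{s'}P(s,a,s')\,V(s')\bigr)$ over the polytope $\mathcal{A}_s$ — a polynomial-size LP — and compare the optimum with $V(s)$; if it exceeds $V(s)$, the maximizing $\mathbf{x}$ exhibits a violated constraint, and otherwise all constraints at $s$ hold. Running the ellipsoid method with this oracle (with the routine bounding-box and bit-length bookkeeping, which is unproblematic since each $\mathcal{A}_s$ is bounded) solves \eqref{lp:opt-myop} in polynomial time. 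Alternatively — the more practical route alluded to above — one dualizes each inner maximization, introducing multipliers $\mathbf{y}_s$ for the facets of $\mathcal{A}_s$; since $V$ enters the inner objective, hence the dual constraints, linearly, this folds everything into a single LP of polynomial size in the variables $\{V(s)\}$ and $\{\mathbf{y}_s\}$.

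Finally, from the optimal $V$ I extract for each $s$ an optimal $\mathbf{x}_s\in\mathcal{A}_s$ (again via the polynomial-size inner LP) and set $\pi_s(\theta,g_a):=\mathbf{x}_s(\theta,a)/\mu_s(\theta)$ whenever $\mu_s(\theta)>0$ (arbitrarily otherwise); then $\phi_s^\pi=\mathbf{x}_s$, so $\pi$ is an IC action advice realizing an optimal policy of $\mathcal{M}^*$ and hence is optimal for the principal. I expect the one genuine subtlety to be the verification that the linearized inequalities define precisely $\mathcal{A}_s$ — in particular the degenerate case where \eqref{eq:posterior} is undefined because $g_a$ is sent with probability zero — and, secondarily, confirming the ellipsoid method's outputs have polynomially bounded encoding length; both are standard but warrant care.
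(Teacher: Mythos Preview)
Your proposal is correct and follows essentially the same route as the paper: solve LP~\eqref{lp:opt-myop} via the ellipsoid method, implementing the separation oracle by maximizing the linear functional on the right of \eqref{eq:opt-myop-cons-1} over the polytope $\mathcal{A}_s$ (described by the linearized IC constraints \eqref{eq:ic}), and note the dual-LP alternative. Your write-up is in fact more detailed than the paper's sketch, spelling out the polytope description of $\mathcal{A}_s$, the degenerate zero-denominator case, and the recovery of $\pi$ from the optimal $V$.
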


% ===============================
% -- Proof by Ellipsoid Method --
% ===============================
\begin{proof}[Proof sketch]
We show that LP~\eqref{lp:opt-myop} can be solved in polynomial time by using the ellipsoid method.
The key to this approach is to implement the {\em separation oracle} in polynomial time. For any given value assignment of the variables (in our problem values of $V(s)$), the oracle should decide correctly whether all the constraints of the LP are satisfied and, if not, output a violated one.

To implement the separation oracle for our problem amounts to solving the following optimization for all $s \in S$:
\[
\max_{\mathbf{x} \in \mathcal{A}_s}
\quad R^*(s, \mathbf{x}) + \gamma \cdot \sum_{s' \in S} P^*(s, \mathbf{x}, s') \cdot V(s') - V(s).
\]
By checking if the above maximum value is positive, we can identify if \eqref{eq:opt-myop-cons-1} is violated for some $\mathbf{x} \in \mathcal{A}_s$.
Indeed, the set of IC action advices can be characterized by the constraints in \eqref{eq:ic}, which are linear constraints if we expand $\Pr(\theta|g_a, \pi_s)$ according to \eqref{eq:posterior} and eliminate the denominator (where we also treat $\pi_s(\theta, g_a)$ as the additional variables and add the constraint $x(\theta, a) = \mu_s(\theta) \cdot \pi_s(\theta, g_a)$ for every $\theta$ and $a$).
\end{proof}

\section{When Agent is Far-sighted}
\label{sec:farsighted}

A far-sighted (FS) agent looks beyond the immediate reward and considers the cumulative reward discounted by $\tilde{\gamma}$. We now study signaling strategy design against an FS agent.

\subsection{Optimal Signaling against FS Agent}
\label{sc:opt-fs}

When facing an FS agent, we cannot define an inducible set $\mathcal{A}_s$ independently for each state. 
The principal needs to take a global view and aim to induce the agent to use a {\em policy} that benefits the principal.
We term the problem of optimal signaling strategy design against an FS agent {\sc OptSig-FS}.

\subsubsection{Best Response of FS Agent}

% We start with the problem from the agent's perspective.
We first investigate an FS agent's best response problem.
When the principal commits to a signaling strategy $\pi$, the best response problem facing the agent can be formulated as an MDP $\mathcal{M}^\pi = \left\langle S\times G, A, P^\pi, \widetilde{R}^\pi \right\rangle$. In each step the agent observes the state $s\in S$ of $\mathcal{M}$ along with a signal $g \in G$ from the principal; the tuple $(s,g)$ constitutes a state in $\mathcal{M}^\pi$, and we call it a {\em meta-state} to distinguish it from states in $\mathcal{M}$.
From the agent's perspective, after they take action $a$, the meta-state transitions to $(s', g')$ with probability 
\begin{equation}
\label{eq:P-star}
P^\pi((s,g), a, (s', g')) = P(s, a, s') \cdot \mathbb{E}_{\theta \sim \mu_{s'}} \pi_{s'}(\theta, g').
% P^\pi((s,g), a, (s', g')) = P(s, a, s') \cdot \sum_{\theta' \in \Theta} \mu_{s'}(\theta') \cdot \pi_{s'}(\theta', g').
\end{equation}
Namely, a next state $s'$ of $\mathcal{M}$ is sampled from $P(s,a,\cdot)$, then a new external parameter $\theta$ is sampled from $\mu_{s'}$ and the principal sends a signal $g \sim \pi_{s'}(\theta)$.
Meanwhile, the following reward is yielded for the agent:
\begin{align}
\widetilde{R}^\pi((s,g), a) 
&= \mathbb{E}_{\theta \sim \Pr(\cdot|g, \pi_s)} \widetilde{R}(s,\theta, a), 
\label{eq:tR-star}
\end{align}
where the posterior belief $\Pr(\theta|g, \pi_s)$ is defined in \eqref{eq:posterior}.

Hence, an optimal policy ${\sigma}:S\times G \to A$ for $\mathcal{M}^\pi$ defines a best response of the agent against $\pi$. 
An optimal signaling strategy of the principal maximizes the cumulative reward against the agent's best response.

\paragraph{Inapproximability}
We show that {\sc OptSig-FS} is highly intractable:
even to find an approximate solution to {\sc OptSig-FS} requires solving an NP-hard problem. 
Hence, it is unlikely that there exists any efficient approximation algorithm for this task, assuming that P=NP is unlikely.

\begin{theorem}
\label{thm:harness-optfs}
Assuming that P $\neq$ NP, then {\sc OptSig-FS} does not admit any polynomial-time $\frac{1}{\lambda^{1-\epsilon}}$-approximation algorithm for any constant $\epsilon > 0$, where $\lambda$ is the number of states $s \in S$ in which the prior distribution $\mu_s$ is non-deterministic (i.e., supported on at least two external parameters).
This holds even when $|\Theta|=2$ and the discount factors $\gamma, \tilde{\gamma} \in (0,1)$ are fixed.
\end{theorem}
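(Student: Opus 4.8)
The plan is to give a gap-preserving reduction from \textsc{Maximum Independent Set} (equivalently \textsc{Max-Clique}), whose $n^{1-\epsilon}$-inapproximability under $\mathrm{P}\neq\mathrm{NP}$ (H{\aa}stad; Zuckerman) is exactly the source of the claimed $\lambda^{1-\epsilon}$ factor. Given a graph $G=(V,E)$ with $|V|=n$, I would build an instance $\mathcal{M}$ of \textsc{OptSig-FS} with $|\Theta|=2$, fixed discount factors $\gamma,\tilde\gamma\in(0,1)$, and exactly $\lambda=n$ states carrying a non-deterministic prior, such that the principal's optimal discounted value equals $\alpha(G)$ up to an efficiently computable rescaling. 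Since it is NP-hard to decide whether $\alpha(G)\ge n^{1-\delta}$ or $\alpha(G)\le n^{\delta}$, any polynomial-time algorithm estimating the optimal principal value within a factor $\lambda^{\epsilon-1}$ would decide this gap whenever $\epsilon>2\delta$, which proves the theorem. (An alternative route, if one prefers to start from a constant-factor-hard problem such as \textsc{Max-3SAT}, is to first establish an $\mathrm{O}(1)$-factor gap and then amplify it by chaining $k$ self-contained copies in sequence, rescaling per-block rewards to neutralize discounting so that a far-sighted agent treats the blocks essentially independently and the gap compounds to $\lambda^{1-\epsilon}$ with $\lambda=\Theta(k)$; I expect the direct \textsc{Independent Set} reduction to be cleaner.)

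The heart of the construction is a vertex gadget together with an edge gadget. For each $v\in V$ I would introduce a single non-deterministic-prior ``gate'' state $s_v$ (prior $\tfrac12/\tfrac12$ over $\theta_1,\theta_2$) at which the agent chooses between an ``in'' action $\mathsf{in}_v$, which routes it through a collector state paying the principal a normalized $+1$, and an ``out'' action $\mathsf{out}_v$, which pays the principal nothing; the agent's own reward is set so that $\mathsf{in}_v$ is worth taking only under a posterior that is strongly concentrated on $\theta_1$, forcing the principal to reveal essentially full information at $s_v$ in order to ``activate'' it. All edge-gadget and auxiliary states are given deterministic priors, so that $\lambda=n$ exactly. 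The edge gadget for $\{u,v\}\in E$ must prevent the agent from being kept ``in'' at both $s_u$ and $s_v$ simultaneously: I would arrange the transitions so that activating a gate forces the play, before it can reach any further collector, to traverse the gates of that vertex's neighbors under a now-unfavorable belief profile, so that a far-sighted agent anticipating future signals strictly prefers to deviate at one of the two adjacent gates. The combination of (i) the coupling of the incentive constraints through the single shared agent value function and (ii) this ``neighbor-poisoning'' topology is what I expect to make the set of activatable gates behave exactly like an independent set of $G$, yielding principal optimum $\alpha(G)$.

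Two steps I expect to be the real obstacles. First, the edge gadget must enforce independence \emph{pairwise and locally} with only polynomially many states: one cannot afford to remember which gates have already been activated, so the ``poisoning'' has to be realized through the fixed topology around each gate rather than through state that records history, and tuning the fixed $\gamma,\tilde\gamma$ and the per-gate reward magnitudes so the deviation incentive fires precisely in the adjacent case (and never otherwise), with $|\Theta|=2$, is delicate. Second, and more subtly, I must rule out that \emph{partial or randomized} signaling lets the principal beat $\alpha(G)$ by half-activating many mutually adjacent gates. This calls for a convexity/rounding argument: show that the principal's value, viewed as a function of the vector of gate ``activation levels'', is maximized at a $0/1$ vertex of the feasible region, and that the feasible $0/1$ vertices are exactly the independent sets of $G$. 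Carrying out this convexity argument while the far-sighted agent's best-response value function is itself moving with the signaling strategy (a fixed point, not a fixed object) is where most of the technical work lies; once it is in place, soundness and completeness of the reduction, as well as the extension to the related environment-design and machine-teaching settings noted in the introduction (whose principals likewise cannot act directly), follow routinely.
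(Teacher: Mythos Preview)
Your high-level plan matches the paper's exactly: reduce from \textsc{Maximum Independent Set}, one non-deterministic-prior state per vertex so that $\lambda=n$, $|\Theta|=2$, fixed discounts, and invoke Zuckerman's $n^{1-\epsilon}$ bound. Two points of difference are worth flagging. First, the paper's edge coupling runs in the direction opposite to your ``forced traversal under a now-unfavorable belief.'' Each vertex $v$ is split into an entry state $s_v$ (enter the gadget or terminate for a small fixed reward), an information state $s'_v$ (the sole non-deterministic-prior state for $v$; the signal here sets the agent's one-step reward in $[0,1]$), and an exit state $s''_v$. At $s''_v$ the agent may take a collect action $b$ (pays the principal and terminates) or, for each neighbor $u$, an \emph{optional} jump $a_{v,u}$ to $s'_u$. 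If the principal is also revealing at $s'_u$, then $V(s'_u)$ is large and the far-sighted agent prefers the jump to collecting, so the principal loses the reward from $v$. Adjacency is thus enforced by an option the agent \emph{wants} to exercise when both endpoints are on, not by a forced detour; and since Markovian signaling makes the posterior at $s'_u$ independent of history, it is unclear how your ``now-unfavorable belief profile'' would even be realized.

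Second, and more importantly, the convexity/rounding step you flag as ``where most of the technical work lies'' is unnecessary, and the paper does nothing of the sort. Given any signaling $\pi$ with principal value at least $k\gamma^{2}/m$ and any optimal agent response $\sigma$, let $N^\star$ be the set of vertices from which $\sigma$ eventually collects with positive probability; since each such start contributes at most $\gamma^{2}/m$, one has $|N^\star|\ge k$. For $v\in N^\star$, optimality of $\sigma$ at $s''_v$ gives $\tilde\gamma^{2}=Q(s''_v,b)\ge Q(s''_v,a_{v,u})=\tilde\gamma\,V(s'_u)$, hence $V(s'_u)\le\tilde\gamma$; then at the entry state $s_u$ the outside option (value $\tilde\gamma$) strictly dominates entering (value $\tilde\gamma\,V(s'_u)\le\tilde\gamma^{2}$), so $u\notin N^\star$. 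The independent set falls out of three one-line Q-value comparisons, with no analysis of fractional activation levels or of a moving best-response fixed point. The three-state decomposition of each vertex gadget is precisely what makes each inequality immediate; conflating the in/out decision with the information state, as in your sketch, would obscure this chain.
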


The proof of Theorem~\ref{thm:harness-optfs} is via a reduction from the {\sc Maximum Independent Set} problem,
%\footnote{All the omitted proofs can be found in the full version of this paper.} 
which is known to be NP-hard to approximate \cite{zuckerman2006linear}. The result may also be of independent interest: It can be easily adapted to show the inapproximability of similar principal-agent problems in dynamic settings. 
%(see remarks in Appendix~\ref{sc:proof-hardness}).
This hardness result also indicates a ``phase transition'' between the cases where $\tilde{\gamma} = 0$ and $\tilde{\gamma} > 0$ given the tractability of {\sc OptSig-myop} showed in Section~\ref{sc:opt-myopic}.

\subsection{Advice-myopic Agent}
\label{sc:adv-myop}

The intractability of {\sc OptSig-FS} motivates us to consider \emph{advice-myopic (AM)} agents, who account for their future rewards like an FS agent does, but behave myopically and ignore the principal's future signals.
In other words, they always assume that the principal will disappear in the next step and rely only on their prior knowledge to estimate the future payoff.
We refer to the optimal signaling strategy problem against an AM agent as {\sc OptSig-AM}.

\paragraph{Equivalence to the Myopic Setting}

Since an AM agent does not consider future signals, their future reward is independent of the principal's signaling strategy. 
This allows us to define a set of inducible distributions of $(\theta, a)$ independently for each state, similarly to our approach to dealing with a myopic agent.
In other words, an AM agent is equivalent to a myopic agent who adds a fixed value to their reward function, and this fixed value is the best future reward they can achieve without the help of any signals. This value is independent of the signaling strategy and can be calculated beforehand.
Let $\widetilde{R}^+: S \times \Theta \times A \to \mathbb{R}$ be the reward function of this equivalent myopic agent.
We have 
\begin{equation}
\label{eq:tR-p}
\widetilde{R}^+(s,\theta,a) = \widetilde{R}(s, \theta, a) + \tilde{\gamma} \cdot \mathbb{E}_{s' \sim P(s,a,\cdot)} \overline{V}(s',g_0),
\end{equation}
where $\overline{V}$ is the optimal value function of the agent when completely uninformative signals are given.
In more detail, let $\pinothing: \Theta \to \Delta(G)$ be a completely uninformative signaling strategy, with $\pinothing(\theta) = \mathbf{\hat{e}}_{g_0}$ for all $\theta$ (i.e., it always sends the same signal $g_0$).
Then $\overline{V}$ is the optimal value function for the MDP $\mathcal{M}^\pinothing =  \left\langle S\times \{g_0\}, A, P^\pinothing, \widetilde{R}^\pinothing \right\rangle$, defined the same way as $\mathcal{M}^\pi$ in Section~\ref{sc:opt-fs}, with $\pi = \pinothing$.
%
%% !!Line break removed to save space
Hence, the Bellman equation gives
% \iftoggle{shortversion}{
\begin{align}
&\overline{V}(s, g_0) \nonumber\\
=& \max_{a \in A} \left( \widetilde{R}^\pinothing (s,\theta,a) + \tilde{\gamma} \cdot \mathbb{E}_{(s',g_0) \sim P^\pinothing((s,g_0), a, \cdot)} \overline{V}(s',g_0) \right) \hspace{-20mm} \nonumber\\
=& \max_{a \in A} \left( \mathbb{E}_{\theta \sim \mu_s}  \widetilde{R}(s,\theta,a) + \tilde{\gamma} \cdot \mathbb{E}_{s' \sim P(s,a, \cdot)} \overline{V}(s',g_0) \right) 
\label{eq:bellman-tV-g0}
\end{align}
for all $s \in S$,
where the second transition follows by \eqref{eq:P-star} and \eqref{eq:tR-star} and we also use the facts that the posterior $\Pr(\cdot|g,\pinothing)$ degenerates to the prior $\mu_s(\cdot)$ as $\pinothing$ is uninformative, and that $P^\pinothing((s,g_0), a, (s',g_0)) = P(s,a,s')$ as the meta-state only transitions among the ones in the form $(s, g_0)$.

We can compute $\overline{V}$ efficiently by solving the above Bellman equation. (A standard LP approach suffices given that $\mathcal{M}^\pinothing$ has a finite action space.) Then we obtain $\widetilde{R}^+$ according to \eqref{eq:tR-p}, with which we can construct an equivalent {\sc OptSig-myop} instance and solve it using our algorithm in Section~\ref{sc:opt-myopic}.
The solution is also optimal to the original {\sc OptSig-AM} instance as we argued above; we state this result in the theorem below and omit the proof.

\begin{theorem}
{\sc OptSig-AM} is solvable in polynomial time.
\end{theorem}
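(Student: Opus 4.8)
\medskip
\noindent\textbf{Proof proposal.}
The plan is to reduce {\sc OptSig-AM} to {\sc OptSig-myop} on a modified instance and then invoke the polynomial-time algorithm for {\sc OptSig-myop} from Section~\ref{sc:opt-myopic} as a black box. Concretely, starting from $\mathcal{M} = \langle S, A, P, \Theta, (\mu_s)_{s\in S}, R, \widetilde{R}\rangle$ I would form $\mathcal{M}^+ = \langle S, A, P, \Theta, (\mu_s)_{s\in S}, R, \widetilde{R}^+\rangle$, i.e., keep the state space, transitions, prior, and the principal's reward $R$ untouched and replace only the agent's reward by the shifted reward $\widetilde{R}^+$ of \eqref{eq:tR-p}. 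The statement to establish is that every signaling strategy $\pi$ induces the same trajectory distribution---hence the same principal payoff---against an AM agent in $\mathcal{M}$ as against a myopic agent in $\mathcal{M}^+$; an optimal solution of the {\sc OptSig-myop} instance $\mathcal{M}^+$ is then optimal for {\sc OptSig-AM} on $\mathcal{M}$.

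First I would check that $\widetilde{R}^+$ is computable in polynomial time with polynomially bounded bit-complexity. By \eqref{eq:tR-p}, $\widetilde{R}^+$ is determined by $\overline{V}$, the optimal value function of the finite MDP $\mathcal{M}^\pinothing$ whose meta-state set $\{(s,g_0) : s \in S\}$ and action set $A$ both have polynomial size. Thus $\overline{V}$ is the unique solution of the Bellman equation \eqref{eq:bellman-tV-g0} and can be obtained by the standard LP for finite MDPs (or value iteration), giving a vector of rationals of polynomial bit-size; substituting into \eqref{eq:tR-p} preserves polynomial bit-size, so the reduction to $\mathcal{M}^+$ is genuinely polynomial.

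The core of the argument is the behavioral equivalence. Fix a signaling strategy $\pi$. By definition an AM agent evaluates each action as if no further signals will arrive, so at meta-state $(s,g)$ the value it assigns to action $a$ is $\mathbb{E}_{\theta \sim \Pr(\cdot|g,\pi_s)}\widetilde{R}(s,\theta,a) + \tilde{\gamma}\cdot\mathbb{E}_{s'\sim P(s,a,\cdot)}\overline{V}(s',g_0)$, which by linearity of expectation equals $\mathbb{E}_{\theta \sim \Pr(\cdot|g,\pi_s)}\widetilde{R}^+(s,\theta,a)$---the one-step value of $a$ for a myopic agent with reward $\widetilde{R}^+$. Hence the AM best response in $\mathcal{M}$ and the myopic best response in $\mathcal{M}^+$ pick the same actions signal-by-signal (under the same tie-breaking toward advised actions). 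Re-running the action-advice reduction of Section~\ref{sc:opt-myopic}: since the extra term $\tilde{\gamma}\cdot\mathbb{E}_{s'\sim P(s,a,\cdot)}\overline{V}(s',g_0)$ depends only on $(s,a)$ and not on $\theta$ or $\pi$, the incentive-compatibility constraints \eqref{eq:ic} written with $\widetilde{R}^+$ are exactly the conditions under which the advised action is a best response for the AM agent. Therefore the inducible sets $\mathcal{A}_s$, and with them the derived MDP $\mathcal{M}^*$ and its LP \eqref{lp:opt-myop}, coincide for the two models; since the principal's objective sums the \emph{true} rewards $R(s_t,\theta_t,a_t)$ discounted by $\gamma$ along a trajectory whose distribution is pinned down by the (unchanged) $\mathcal{A}_s$'s and $P$, the two optimization problems share the same feasible region and objective, and an optimal policy of $\mathcal{M}^*$ translates back to an optimal IC action advice for {\sc OptSig-AM}.

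I expect the one genuinely delicate point to be the opening claim of the previous paragraph: making precise that an AM agent's reaction is exactly the one-step-greedy policy with continuation value $\overline{V}$---rather than a history- or signal-dependent object---so that the ``the principal disappears next step'' semantics is captured exactly by \eqref{eq:bellman-tV-g0}. Once that is nailed down, the rest is either the myopic analysis of Section~\ref{sc:opt-myopic} verbatim with $\widetilde{R}$ replaced by $\widetilde{R}^+$, or a direct invocation of its polynomial-time guarantee, so no further work is needed.
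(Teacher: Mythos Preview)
Your proposal is correct and follows exactly the approach the paper takes: the paper itself argues the equivalence via $\widetilde{R}^+$ in the paragraphs preceding the theorem and then explicitly \emph{omits} the proof, so your write-up is in fact a more detailed and careful version of what the paper sketches. The only thing to note is that you are filling in steps the authors deemed obvious (polynomial bit-size of $\overline{V}$, the signal-by-signal equivalence of best responses, coincidence of the inducible sets $\mathcal{A}_s$), which is fine and does not deviate from their route.
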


\subsection{Threat-based Action Advice against FS Agent}

Now that we can efficiently solve {\sc OptSig-AM}, we will show that we can use a solution to {\sc OptSig-AM} to efficiently design a signaling strategy against an FS agent.
Interestingly, we can prove that this strategy guarantees the principal the payoff as if they are playing against an AM agent, when the agent is actually FS. 
The idea is to add a threat in the action advice: if the agent does not take the advised action, then the principal will stop providing any information in future steps (equivalently, switching to strategy $\pinothing$).
Essentially, this amounts to a one-memory strategy, denoted $\varpi = \left(\varpi_s \right)_{s \in S}$, where each $\varpi_s:S \times \Theta \times G \times A \to \Delta(A)$ also depends on the signal and the action taken in the previous step (i.e., whether the action follows the signal).

More formally, suppose that $\pi = (\pi_s)_{s \in S}$ is a solution to {\sc OptSig-AM} and without loss of generality it is an IC action advice.
We construct a one-memory strategy:
% \iftoggle{shortversion}{
\begin{equation}
\label{eq:pi-mmone}
\varpi_s((s,\theta), g, a) = 
\begin{cases}
\pi_s (\theta), & \text{if } g \in \{ g_a , \textit{null}\} \\
\pinothing(\theta) = \mathbf{\hat{e}}_{g_0}, & \text{otherwise}
\end{cases}
\end{equation}
where $g$ and $a$ are the signal and action taken in the previous step (assume that $g$ is initialized to \textit{null} in the first step); each signal $g_a$ advises the agent to take the corresponding action $a$, and $g_0$ is a signal that does not correspond to any action.

Our key finding is that, via this simple threat-based mechanism, the strategy $\varpi$ we design is persuasive for an FS agent: the threat it makes effectively incentivizes the FS agent to take advised actions. 
To show this, we first analyze the problem facing the agent when the principal commits to $\varpi$.

\paragraph{Best Response to $\varpi$}

From an FS agent's perspective, the principal committing to $\varpi$ results in an MDP $\mathcal{M}^\varpi =\left\langle S\times G, A, P^\varpi, \widetilde{R}^\varpi \right\rangle$.
We have $G = \{g_0\} \cup G_A$, so each meta-state $(s, g)$ in $\mathcal{M}^\varpi$ consists of a state of $\mathcal{M}$ and a signal from the principal. 
The transition dynamics depend on whether the signal sent in the current state is $g_0$ or not (i.e., whether the principal has switched to the threat-mode):

\begin{itemize}
\item
For all $(s, g_a) \in S\times G_A$, the agent following the advised action $a$ results in transition probabilities:
\begin{subequations}
\label{eq:thm-varpi-P-varpi}
\begin{align}
& P^\varpi((s,g_a), a, \cdot) =  P^\pi((s,g_a), a, \cdot); \label{eq:thm-varpi-P-varpi-a} \end{align}
\end{subequations}
Otherwise, i.e., if any action $b \neq a$ is taken, the principal will fulfill the threat and send $g_0$ in the next step. Hence, 
%\iftoggle{shortversion}{
\begin{align}
P^\varpi((s,g_a), b, (s',g)) =
\begin{cases}
\sum_{g'\in G} P^\pi((s,g_a), a, (s',g')), & \text{if } g = g_0\\
0, & \text{otherwise}
\end{cases}
\tag{\ref{eq:thm-varpi-P-varpi}b} \label{eq:thm-varpi-P-varpi-b}
\end{align}

\item
For all $(s, g_0) \in S\times \{g_0\}$, the threat is activated in these meta-states, we have:
% \iftoggle{shortversion}{
\begin{align}
P^{\varpi}((s,g_0), a, (s', g')) 
= P^{\pinothing}((s,g_0), a, (s', g')) = 
\begin{cases}
P(s,a, s'), & \text{if } g' = g_0 \\
0, & \text{otherwise}
\end{cases}
\tag{\ref{eq:thm-varpi-P-varpi}c} 
\label{eq:thm-varpi-P-varpi-c}
\end{align}

\end{itemize}

Similarly, the reward function differs in meta-states $(s,g_a)$ and $(s,g_0)$. We have
%\begin{subequations}
\begin{align}
\widetilde{R}^\varpi((s,g), \cdot) =  
\begin{cases}
\widetilde{R}^\pi((s,g), \cdot), &\text{ if } g \in G_A \\
\widetilde{R}^{\pinothing}((s,g_0), \cdot), & \text{ otherwise}
\end{cases}
\label{eq:thm-varpi-tR-varpi}
\end{align}
%\end{subequations}

\paragraph{Persuasiveness of $\varpi$}

To show the persuasiveness of $\varpi$, we argue that the following policy $\sigma: S\times G \to A$ of the agent, in which the agent always takes the advised action, is optimal in response to $\varpi$.
For all $s \in S$, we define
\begin{equation}
\label{eq:opt-policy-to-varpi}
\sigma(s,g) =  
\begin{cases}
a, & \text{if } g = g_a \in G_A \\
\bar{\sigma}(s,g_0), &  \text{if } g = g_0
\end{cases}
\end{equation}
where $\bar{\sigma}$ is an optimal policy against $\pinothing$, the value function $\overline{V}: S \times G \to \mathbb{R}$ of which (as from the agent's perspective) is already defined in Section~\ref{sc:adv-myop} and satisfies \eqref{eq:bellman-tV-g0}.

Theorem~\ref{thm:varpi-IC} shows the optimality of $\sigma$. Intuitively, we show that the value function of $\sigma$ is at least as large as $\overline{V}$, so the agent has no incentive to provoke the threat.

\begin{theorem}
\label{thm:varpi-IC}
The policy $\sigma$ defined in \eqref{eq:opt-policy-to-varpi} is an optimal response of an FS agent to $\varpi$. (Hence, $\varpi$ incentivizes an FS agent to take the advised action.)
\end{theorem}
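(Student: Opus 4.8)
The plan is to verify that the value function $V^\sigma$ of the policy $\sigma$ is a fixed point of the Bellman optimality operator of the MDP $\mathcal{M}^\varpi$ the agent faces; by the standard verification theorem for discounted MDPs this makes $\sigma$ an optimal response to $\varpi$, and since $\sigma$ always plays the advised action it witnesses the claimed persuasiveness. Throughout write $\phi_s := \phi_s^\pi$, $\overline{V}_0(s) := \overline{V}(s, g_0)$, and let $Q^\sigma((s,g),a)$ denote the state--action value of $\sigma$ in $\mathcal{M}^\varpi$.

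First I would pin down $V^\sigma$ on the two kinds of meta-states. On the meta-states $(s, g_0)$ the transitions and rewards of $\mathcal{M}^\varpi$ coincide with those of $\mathcal{M}^\pinothing$ by \eqref{eq:thm-varpi-P-varpi-c} and \eqref{eq:thm-varpi-tR-varpi}, this region is absorbing, and $\sigma$ restricted to it is the optimal policy $\bar{\sigma}$ of $\mathcal{M}^\pinothing$; hence $V^\sigma(s, g_0) = \overline{V}_0(s)$ for all $s$. On the meta-states $(s, g_a)$, playing the advised action recreates exactly the law $P^\pi$ (by \eqref{eq:thm-varpi-P-varpi-a}) and the reward $\widetilde{R}^\pi$ (by \eqref{eq:thm-varpi-tR-varpi}), so $\sigma$'s continuation there is ``always follow the advice'' in $\mathcal{M}^\pi$. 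Unfolding one step,
\[
V^\sigma(s, g_a) \;=\; \mathbb{E}_{\theta \sim \Pr(\cdot \mid g_a, \pi_s)} \widetilde{R}(s, \theta, a) \;+\; \tilde{\gamma} \sum_{s' \in S} P(s, a, s') \cdot \overline{U}(s'),
\]
where the ``pre-signal'' follow-advice value $\overline{U}: S \to \mathbb{R}$ is the unique fixed point of the operator $(T W)(s) := \mathbb{E}_{(\theta, a) \sim \phi_s}\big[\widetilde{R}(s, \theta, a) + \tilde{\gamma}\, \mathbb{E}_{s' \sim P(s, a, \cdot)} W(s')\big]$, which is a monotone $\tilde{\gamma}$-contraction (the identity $\overline{U}(s') = \sum_a \Pr(g_a\mid\pi_{s'})\,V^\sigma(s', g_a)$ is a routine consistency check).

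The crux is the lemma $\overline{U} \ge \overline{V}_0$ pointwise, and this is where the hypothesis on $\pi$ enters. Since $\pi$ solves {\sc OptSig-AM}, which (by the equivalence of the AM and myopic settings and \eqref{eq:tR-p}) means $\pi$ is incentive compatible in the sense of \eqref{eq:ic} with $\widetilde{R}$ replaced by $\widetilde{R}^+$, for every $s$ and every signal $g_a$ on the support of $\pi_s$ the action $a$ weakly beats the fixed action $\bar{\sigma}(s, g_0)$ under the posterior $\Pr(\cdot \mid g_a, \pi_s)$. I would average this inequality over $a$ with weights $\Pr(g_a \mid \pi_s)$ and use $\Pr(g_a \mid \pi_s)\,\Pr(\theta \mid g_a, \pi_s) = \phi_s(\theta, a)$, so the posteriors unravel back to the prior $\mu_s$ on the right; this yields $\mathbb{E}_{(\theta,a) \sim \phi_s} \widetilde{R}^+(s, \theta, a) \ge \mathbb{E}_{\theta \sim \mu_s} \widetilde{R}^+(s, \theta, \bar{\sigma}(s,g_0)) = \overline{V}_0(s)$, the last equality being \eqref{eq:bellman-tV-g0}. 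Expanding $\widetilde{R}^+$ via \eqref{eq:tR-p}, the left-hand side is $(T \overline{V}_0)(s)$, so $T \overline{V}_0 \ge \overline{V}_0$ pointwise; iterating the monotone map $T$ then gives $\overline{U} = \lim_{n} T^n \overline{V}_0 \ge \overline{V}_0$. I expect this step --- choosing the right averaging and getting the fixed-point monotonicity argument straight --- to be the main obstacle; everything else is bookkeeping.

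Finally I would check Bellman optimality of $V^\sigma$ in $\mathcal{M}^\varpi$. On $(s, g_0)$ there is nothing to do, since $V^\sigma(\cdot, g_0) = \overline{V}_0$ and $\bar{\sigma}$ is optimal for $\mathcal{M}^\pinothing$. On $(s, g_a)$, deviating to $b \ne a$ provokes the threat, so by \eqref{eq:thm-varpi-P-varpi-b} the agent collects only the immediate reward $\mathbb{E}_{\theta \sim \Pr(\cdot \mid g_a, \pi_s)} \widetilde{R}(s, \theta, b)$ and then continues from a $g_0$-meta-state with value $\overline{V}_0$; regrouping the future terms via \eqref{eq:tR-p},
\[
V^\sigma(s, g_a) - Q^\sigma((s, g_a), b)
= \underbrace{\mathbb{E}_{\theta \sim \Pr(\cdot \mid g_a, \pi_s)}\big[\widetilde{R}^+(s, \theta, a) - \widetilde{R}^+(s, \theta, b)\big]}_{\ge\, 0 \;\text{(AM-IC)}}
+\; \tilde{\gamma}\, \underbrace{\mathbb{E}_{s' \sim P(s, a, \cdot)}\big[\overline{U}(s') - \overline{V}_0(s')\big]}_{\ge\, 0 \;\text{(lemma)}} \;\ge\; 0.
\]
Hence $\sigma$ attains the maximum in the Bellman equation of $\mathcal{M}^\varpi$ at every meta-state, so $V^\sigma$ is its optimal value function and $\sigma$ is an optimal response; being an always-follow-advice policy, $\sigma$ shows that $\varpi$ is persuasive for the FS agent. (Meta-states $(s, g_a)$ with $g_a$ outside the support of $\pi_s$ are unreachable under any agent policy and can be ignored, or resolved by the tie-breaking convention.)
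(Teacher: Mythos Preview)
Your proposal is correct and follows essentially the same route as the paper: the same key lemma ($\overline{U}\ge\overline{V}_0$, i.e.\ the paper's Lemma~\ref{lmm:V-sigma-tV}) proved via the same monotone fixed-point argument using AM-incentive-compatibility of $\pi$ with respect to $\widetilde{R}^+$, followed by the same Bellman-optimality check that combines this lemma with AM-IC at each $(s,g_a)$. The only differences are cosmetic: you package the recursion as a single operator $T$ and iterate from $\overline{V}_0$ upward, whereas the paper writes $U=X+\tilde{\gamma} T U$, shows $\overline{V}_0\le X+\tilde{\gamma} T\overline{V}_0$, and iterates the difference downward; and your final inequality is a clean two-term decomposition where the paper chains the same two ingredients sequentially.
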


The direct consequence of Theorem~\ref{thm:varpi-IC} is that $\varpi$ guarantees the principal the best payoff they can obtain when facing an AM agent, even though the agent is FS (Corollary~\ref{crl:varpi-IC}).
Hence, $\varpi$ serves as an alternative approach to deal with an FS agent.
Note that this threat-based strategy may not be an optimal one-memory strategy.
Indeed, with minor changes to our proof of Theorem~\ref{thm:harness-optfs}, we can show that for any positive integer $k$ the problem of computing an optimal $k$-memory strategy is inapproximable (see the appendix).
In contrast, in the myopic and advice-myopic settings, since the agent's behavior is Markovian, the optimal signaling strategies we designed remain optimal even when we are allowed to use memory-based strategies.

\begin{corollary}
\label{crl:varpi-IC}
By using $\varpi$ against an FS agent, the principal's cumulative reward is the same as the highest cumulative reward they can obtain against an AM agent.
\end{corollary}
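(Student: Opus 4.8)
The plan is to combine Theorem~\ref{thm:varpi-IC} with the equivalence between {\sc OptSig-AM} and {\sc OptSig-myop} established in Section~\ref{sc:adv-myop}. First I would recall that $\pi$, the signaling strategy from which $\varpi$ is built in \eqref{eq:pi-mmone}, is an optimal solution to {\sc OptSig-AM}, and by the discussion following \eqref{eq:tR-p} it corresponds to an optimal policy in the equivalent {\sc OptSig-myop} instance with reward function $\widetilde{R}^+$. By definition of that instance, the principal's cumulative reward under $\pi$ against an AM agent equals the optimal value $\sum_{s} z_s \cdot V(s)$ of LP~\eqref{lp:opt-myop} for that instance, which is precisely the highest cumulative reward obtainable against an AM agent.

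The remaining step is to show that deploying $\varpi$ against an FS agent yields exactly this same cumulative reward for the principal. By Theorem~\ref{thm:varpi-IC}, an FS agent's optimal response to $\varpi$ is the policy $\sigma$ defined in \eqref{eq:opt-policy-to-varpi}, under which the agent always takes the advised action $a$ upon receiving $g_a$ (and, since the agent never provokes the threat, the meta-state $g_0$ is never reached from the starting distribution $\mathbf{z}$, which places all mass on signals in $G_A$). Hence the induced trajectory over $(s_t, \theta_t, a_t)$ has exactly the same distribution as the trajectory induced when the principal plays $\pi$ and the agent follows the advice: at each state $s$, the external parameter is drawn from $\mu_s$, the signal $g_a$ is sent with probability $\pi_s(\theta, g_a)$, action $a$ is taken, and the state transitions via $P(s,a,\cdot)$ — this is the same process in both cases because $\varpi_s((s,\theta), g_a, a) = \pi_s(\theta)$ by \eqref{eq:pi-mmone} and the threat branch is never triggered. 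Since the principal's reward function $R$ depends only on $(s_t, \theta_t, a_t)$ and not on signals, the principal's discounted cumulative reward $\mathbb{E}[\sum_t \gamma^t R(s_t,\theta_t,a_t) \mid \mathbf{z}, \varpi]$ equals the corresponding quantity under $\pi$ against an AM agent, completing the argument.

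I do not expect a serious obstacle here, since the corollary is essentially a bookkeeping consequence of Theorem~\ref{thm:varpi-IC} and the AM/myopic equivalence. The one point requiring mild care is the claim that the $g_0$-meta-states are unreachable under $\sigma$ from $\mathbf{z}$: this follows because the initial signal (before any threat could be activated) is $g_a$ for some $a \in A$ — recall $G = \{g_0\} \cup G_A$ and $\pi$ is an IC action advice, so $g_0$ is never emitted by $\pi$ — and $\sigma$ then prescribes taking $a$, so by \eqref{eq:thm-varpi-P-varpi-a} the next meta-state again carries a signal in $G_A$; inductively the threat is never fulfilled. Consequently the value-function comparison in Theorem~\ref{thm:varpi-IC} is not merely an inequality bounding the agent's payoff but pins down the realized trajectory, which is what transfers the principal's payoff guarantee verbatim.
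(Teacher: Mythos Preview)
Your proposal is correct and follows essentially the same approach as the paper's proof: invoke Theorem~\ref{thm:varpi-IC} to conclude the FS agent follows all advice, observe that the threat branch (meta-states with $g_0$) is therefore never reached, and deduce that the trajectory distribution under $\varpi$ coincides with that under $\pi$, so the principal's cumulative reward matches the {\sc OptSig-AM} optimum. The paper's version is a terse three-sentence sketch of exactly this argument; your write-up is more explicit about the inductive unreachability of $g_0$ and the dependence of $R$ only on $(s_t,\theta_t,a_t)$, which is all to the good.
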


% ================= EXPERIMENTS ================= %

% ==================== %
% --    RESULTS.    -- %
% ==================== %

% gamma_pr = gamma_ag = 0.8, n_termin = 0, S = 10, Theta = 10, A = 10
\begin{filecontents*}{random_termin_0.dat}
x	noSigMyop	noSigFS	optSigMyop	optSigAM	StdDev_noSigMyop	StdDev_noSigFS	StdDev_optSigMyop	StdDev_optSigAM
-1	0.3738	0.3732	0.3861	0.3732	0.0210	0.0207	0.0210	0.0207
-0.75	0.3788	0.3787	0.4750	0.4728	0.0213	0.0210	0.0206	0.0213
-0.5	0.4264	0.4258	0.6361	0.6359	0.0229	0.0225	0.0195	0.0189
-0.25	0.4945	0.4947	0.7976	0.7972	0.0470	0.0419	0.0189	0.0191
0	0.5535	0.5550	0.8900	0.8898	0.0326	0.0348	0.0144	0.0139
0.25	0.5955	0.5966	0.9373	0.9373	0.0341	0.0370	0.0130	0.0130
0.5	0.6760	0.6752	0.9777	0.9776	0.0233	0.0258	0.0075	0.0075
0.75	0.7225	0.7236	0.9962	0.9962	0.0217	0.0217	0.0012	0.0012
1	0.7239	0.7246	1.0000	1.0000	0.0173	0.0174	0.0000	0.0000
\end{filecontents*}

% gamma_pr = gamma_ag = 0.8, n_termin = 5, S = 10, Theta = 10, A = 10
\begin{filecontents*}{random_termin_5.dat}
x	noSigMyop	noSigFS	optSigMyop	optSigAM	StdDev_noSigMyop	StdDev_noSigFS	StdDev_optSigMyop	StdDev_optSigAM
-1	0.35593	0.33726	0.42377	0.33726	0.03745	0.02755	0.03531	0.02755
-0.75	0.35708	0.34237	0.48445	0.45000	0.03606	0.03026	0.03355	0.02861
-0.5	0.39616	0.41897	0.62006	0.64115	0.04570	0.05675	0.03785	0.03104
-0.25	0.46579	0.54251	0.78126	0.80849	0.05694	0.06444	0.03613	0.03384
0	0.51516	0.60931	0.85724	0.89083	0.06808	0.05798	0.03031	0.02619
0.25	0.55664	0.67718	0.91339	0.94198	0.05390	0.05858	0.02565	0.01740
0.5	0.62777	0.70299	0.96651	0.98312	0.06265	0.04380	0.01354	0.00864
0.75	0.67514	0.74335	0.98963	0.99621	0.05585	0.03686	0.00602	0.00351
1	0.67614	0.74642	0.99509	0.99988	0.04423	0.02547	0.00554	0.00016
\end{filecontents*}

% gamma_pr = gamma_ag = 0.8, n_termin = 5, S = 10, Theta = 10, A = 2, 4, ..., 20
\begin{filecontents*}{random_var_A.dat}
x	noSigMyop	noSigFS	optSigMyop	optSigAM	StdDev_noSigMyop	StdDev_noSigFS	StdDev_optSigMyop	StdDev_optSigAM
2	0.74863307	0.783622794	0.943143975	0.948341753	0.073868669	0.052242594	0.046179078	0.034445796
4	0.578862771	0.644089195	0.887202185	0.907489877	0.062212597	0.064782909	0.028949984	0.026976017
6	0.560456845	0.631531642	0.889177759	0.91011522	0.064027732	0.057008111	0.03661338	0.027612576
8	0.535200316	0.62500774	0.864634026	0.88740123	0.05626259	0.053848927	0.034003359	0.025978449
10	0.507299774	0.602253511	0.859452465	0.890939038	0.059903799	0.053524056	0.029332012	0.023266383
12	0.49372441	0.612498547	0.863303789	0.889884377	0.053366772	0.084402172	0.024596925	0.023318811
14	0.500446574	0.601085354	0.862855278	0.899169987	0.058911452	0.068518752	0.028360227	0.018216499
16	0.492410477	0.605254731	0.859073879	0.890587317	0.04848057	0.055418809	0.019473018	0.020191748
18	0.492308035	0.621857845	0.854092068	0.893283384	0.059597026	0.044174994	0.024307105	0.024802301
20	0.479826452	0.621580456	0.838472584	0.887574686	0.049178084	0.059013723	0.02095423	0.01827918
\end{filecontents*}

% gamma_pr = gamma_ag = 0.8, n_termin = 5, S = 10, A = 10, Theta = 2, 4, ..., 20
\begin{filecontents*}{random_var_Theta.dat}
x	noSigMyop	noSigFS	optSigMyop	optSigAM	StdDev_noSigMyop	StdDev_noSigFS	StdDev_optSigMyop	StdDev_optSigAM
2	0.482484768	0.583631079	0.572873285	0.671331929	0.108907209	0.103063693	0.101915497	0.092240092
4	0.516546415	0.589571689	0.744209576	0.797547117	0.054381728	0.069856641	0.059829384	0.054309253
6	0.531329519	0.582464051	0.814546249	0.848180565	0.073408881	0.066211273	0.037732313	0.032680023
8	0.499467502	0.619041967	0.844344678	0.878313314	0.06590934	0.058184843	0.040096158	0.033063595
10	0.507299774	0.602253511	0.859452465	0.890939038	0.059903799	0.053524056	0.029332012	0.023266383
12	0.489799797	0.613262629	0.879725504	0.90318646	0.065197258	0.051660661	0.024672125	0.021784016
14	0.534630662	0.616347534	0.895852602	0.905900109	0.064246235	0.053127036	0.022310427	0.022659223
16	0.506887755	0.620763166	0.897921055	0.913776423	0.051008006	0.056024823	0.022814504	0.021811932
18	0.50584322	0.641229745	0.91476526	0.929248891	0.045425201	0.042820119	0.016724533	0.01946781
20	0.497011075	0.627829859	0.9177987	0.932287741	0.042830308	0.034372274	0.013174704	0.01086373
\end{filecontents*}

% gamma_pr = gamma_ag = 0.8, n_termin = 0, Theta = 10, A = 10, S = 2, 4, ..., 20
\begin{filecontents*}{random_var_S.dat}
x	noSigMyop	noSigFS	optSigMyop	optSigAM	StdDev_noSigMyop	StdDev_noSigFS	StdDev_optSigMyop	StdDev_optSigAM
2	0.5312	0.5441	0.8931	0.8943	0.0866	0.0909	0.0338	0.0326
4	0.5419	0.5451	0.8915	0.8912	0.0781	0.0786	0.0208	0.0188
6	0.5721	0.5772	0.8858	0.8845	0.0349	0.0350	0.0219	0.0219
8	0.5461	0.5451	0.8856	0.8853	0.0328	0.0334	0.0158	0.0163
10	0.5538	0.5552	0.8817	0.8815	0.0348	0.0362	0.0179	0.0178
12	0.5442	0.5426	0.8830	0.8825	0.0330	0.0339	0.0122	0.0121
14	0.5536	0.5546	0.8879	0.8877	0.0298	0.0264	0.0160	0.0155
16	0.5595	0.5598	0.8842	0.8842	0.0278	0.0287	0.0150	0.0151
18	0.5501	0.5492	0.8831	0.8831	0.0270	0.0276	0.0127	0.0127
20	0.5470	0.5452	0.8858	0.8858	0.0254	0.0246	0.0113	0.0113
\end{filecontents*}

% gamma_pr = gamma_ag = 0.8, n_termin = 0, Theta = 10, A = 10, S = 2, 4, ..., 20
\begin{filecontents*}{random_var_n_termin.dat}
x	noSigMyop	noSigFS	optSigMyop	optSigAM	StdDev_noSigMyop	StdDev_noSigFS	StdDev_optSigMyop	StdDev_optSigAM
0	0.553831973	0.555206668	0.881701345	0.881502161	0.034847727	0.036212787	0.017919788	0.017798654
1	0.542007149	0.54088694	0.88357069	0.883987486	0.040599872	0.040924087	0.01485975	0.015006586
2	0.511110355	0.610552069	0.862183716	0.891419511	0.047899869	0.045956837	0.018715933	0.023410127
3	0.494618987	0.624812266	0.860749226	0.895563737	0.04438792	0.055194216	0.015885294	0.019681727
4	0.507965324	0.621297102	0.8593661	0.891562514	0.054056905	0.04941846	0.029517346	0.026215446
5	0.512364771	0.608112514	0.863227079	0.890950961	0.059424293	0.056130444	0.029510623	0.024817918
6	0.500771829	0.58526269	0.859329082	0.882442235	0.063787026	0.057742985	0.029086997	0.027964692
7	0.549934923	0.598755154	0.875366377	0.887468475	0.058802206	0.070519536	0.033072262	0.03105445
8	0.543924797	0.580441424	0.870888248	0.878791058	0.079776836	0.075000962	0.033205794	0.036577917
9	0.554893713	0.573405035	0.879145888	0.889392923	0.091676587	0.096616953	0.048744151	0.043520678
\end{filecontents*}

% gamma_pr=0.1,0.2,...,1.0, gamma_ag=0.8, n_termin = 5, Theta = 10, A = 10, S = 10
\begin{filecontents*}{random_var_gamma_pr.dat}
x	noSigMyop	noSigFS	optSigMyop	optSigAM	StdDev_noSigMyop	StdDev_noSigFS	StdDev_optSigMyop	StdDev_optSigAM
0.1	0.542817987	0.532614733	0.887398951	0.858322794	0.047746716	0.047030633	0.018345574	0.030063817
0.2	0.541511555	0.564548845	0.882466671	0.860145065	0.050008852	0.048722436	0.023460899	0.033660571
0.3	0.541320211	0.542940216	0.887821687	0.866287125	0.047811471	0.058081721	0.023566897	0.033966321
0.4	0.535341241	0.568062792	0.883502554	0.869680272	0.033157384	0.047183041	0.011303637	0.024960651
0.5	0.552203184	0.570671727	0.87584847	0.865528387	0.060296727	0.042079713	0.026802218	0.036550045
0.6	0.534422831	0.588998627	0.873445454	0.87911784	0.050899228	0.043627358	0.017101178	0.018935009
0.7	0.505117251	0.573232618	0.870133169	0.881388915	0.046960878	0.047667407	0.023827303	0.031864905
0.8	0.516804336	0.602181824	0.862390425	0.888382086	0.061903017	0.05681739	0.027337395	0.024972572
0.9	0.521583509	0.632221604	0.860183482	0.893876342	0.047465639	0.059464504	0.028894305	0.019054988
\end{filecontents*}

% gamma_pr = 0.8, gamma_ag=0.1,0.2,...,1.0, n_termin = 5, Theta = 10, A = 10, S = 10
\begin{filecontents*}{random_var_gamma_ag.dat}
x	noSigMyop	noSigFS	optSigMyop	optSigAM	StdDev_noSigMyop	StdDev_noSigFS	StdDev_optSigMyop	StdDev_optSigAM
0.1	0.516040829	0.519856378	0.862379848	0.865618773	0.040634859	0.042581445	0.017036435	0.017089796
0.2	0.525489671	0.542592892	0.862299516	0.87077001	0.042141969	0.048384819	0.028488255	0.027448185
0.3	0.501410146	0.53516159	0.859857315	0.874180125	0.032997387	0.043443517	0.014787824	0.016570525
0.4	0.515569133	0.535514172	0.862861211	0.880493753	0.056887778	0.053443737	0.026078947	0.024498635
0.5	0.527808399	0.576722271	0.868661273	0.891143377	0.048137511	0.054858676	0.023363134	0.022682907
0.6	0.511897381	0.580622708	0.864681487	0.886096737	0.043469094	0.062283153	0.028780847	0.026102746
0.7	0.526031812	0.596796779	0.864059007	0.891985812	0.068273435	0.064386493	0.020397338	0.022331242
0.8	0.516804336	0.602181824	0.862390425	0.888382086	0.061903017	0.05681739	0.027337395	0.024972572
0.9	0.50022126	0.623946925	0.858082682	0.882197394	0.053283307	0.049788754	0.019320352	0.023222594
\end{filecontents*}

% ====================== %
% --  FIGURE SETTING  -- %
% ====================== %

%\newlength\figurewidth
%\setlength\figurewidth{.27\linewidth}
%\newlength\figureheight
%\setlength\figureheight{.26\linewidth}

\definecolor{myRed}{RGB}{230,20,0}
\definecolor{myGreen}{RGB}{20,190,0}
\definecolor{myBlue}{RGB}{0,20,210}

\newenvironment{myplot}[1]
{
	\begin{tikzpicture}
	\begin{axis}[
	width=\figurewidth,
	height=\figureheight,
	xlabel={\empty},
	ylabel={\empty},
	xmin=-1, xmax=1,
	ymin=0, ymax=1.0,
    xtick={ -1, -.5, 0, .5, 1},
	minor x tick num=1,
    minor y tick num=0,
    minor tick length=0.4mm,
    xticklabels={$-1$,$-.5$,$0$,$.5$,$1$},
	ytick={0.2,0.4,0.6,0.8,1.0},
    yticklabels={$.2$,$.4$,$.6$,$.8$,$1$},
    tick label style = {font={\footnotesize}},
	major tick length=0.6mm,
	every tick/.style={
		black,
	},
	legend pos=north east,
	xmajorgrids=true,
    ymajorgrids=true,
    %yminorgrids=true,
	grid style=dotted,
	%style={font=\small},
	%label style={font={\footnotesize}}, %\sansmath\sffamily}},
	%legend style={legend columns=1, row sep=-0.3mm, inner sep=2pt, font={\scriptsize}},
	legend style={font=\footnotesize, /tikz/every even column/.append style={column sep=3mm}, legend columns=7, inner sep=0.5mm, anchor=north east, },
	legend cell align=left,
	x label style={font={\tiny},at={(axis description cs:0.5,0.0)}},
	y label style={font={\small},at={(axis description cs:0.15,0.48)}},
    %x tick label style = {font=\small},
    %y tick label style = {font=\small},
	cycle list={%
		{myBlue, semithick},    % noSigMyop
		{myBlue, densely dotted, thick}, % noSigFS
		{myRed, semithick}, % optSigMyop
		{myRed, densely dotted, thick}, % optSigAM
		%{myRed, semithick, mark=o, mark options={solid}, mark size=1.0pt}, 
		%{myGreen, semithick, mark=x, mark options={solid}, mark size=1.7pt}, 
	},
	title style={font={\small}, at={(axis description cs:0.5,-0.5)}},
	#1,
	]
}{
\end{axis}
\end{tikzpicture}
}

% ----------------- PLOT STANDARD ERROR BAND -----------------
\newcommand{\errorband}[5][]{ % x column, y column, error column, optional argument for setting style of the area plot
\pgfplotstableread[col sep=space]{#2}\datatable
    \addplot [draw=none, stack plots=y, forget plot] table [
        x={#3},
        y expr=\thisrow{#4}-\thisrow{#5}
    ] {\datatable};

    % Stack twice the error, draw as area plot
    \addplot [draw=none, fill=gray!40, stack plots=y, area legend, #1] table [
        x={#3},
        y expr=2*\thisrow{#5}
    ] {\datatable} \closedcycle;

    % Reset stack using invisible plot
    \addplot [forget plot, stack plots=y,draw=none] table [x={#3}, y expr=-(\thisrow{#4}+\thisrow{#5})] {\datatable};
}

% ----------------- ADD PLOT COMMANDS ----------------- 
\newcommand{\addplotmyop}[1]{%
    \addplot table[x=x, y=noSigMyop] {#1};
    \addplot coordinates{(0,0)};
    \addplot table[x=x, y=optSigMyop] {#1};
    \addplot coordinates{(0,0)};
    \errorband[myBlue, opacity=0.2]{#1}{x}{noSigMyop}{StdDev_noSigMyop}
    \errorband[myRed, opacity=0.2]{#1}{x}{optSigMyop}{StdDev_optSigMyop}
    }
    
\newcommand{\addplotfs}[1]{%
    \addplot coordinates{(0,0)};
    \addplot table[x=x, y=noSigFS] {#1};
    \addplot coordinates{(0,0)};
    \addplot table[x=x, y=optSigAM] {#1};
    \errorband[myBlue, opacity=0.2]{#1}{x}{noSigFS}{StdDev_noSigFS}
    \errorband[myRed, opacity=0.2]{#1}{x}{optSigAM}{StdDev_optSigAM}
    }
% -------------------------------------------------------------

\begin{NoHyper}
\begin{figure*}[h]
\centering
%\ref{leg:all}
\begin{tabular}{@{\hspace{0mm}}l@{\hspace{0mm}}l@{\hspace{0mm}}l}
% ========================
\begin{myplot}
    {
        title={(a)\quad $\beta$},
        legend entries={ {\sc NoSig-myop}, {\sc NoSig-AM/FS}, {\sc OptSig-myop}, {\sc OptSig-AM (Threat-FS)} },%
 	    legend to name=leg:all,
    }
    \addplotmyop{random_termin_5.dat}
\end{myplot}%\\[.5mm]
&
% ========================
\begin{myplot}
    {
        title={(b)\quad $\beta$},
        yticklabels={\empty},
    }
    \addplotfs{random_termin_5.dat}
\end{myplot}%
&
% ========================
\begin{myplot}
    {
        title={(c)\quad $|A|$},
        yticklabels={\empty},
        xmin=2, xmax=20,
        xtick={ 4, 8, 12, 16, 20},
        xticklabels={$4$,$8$,$12$,$16$,$20$},
    }
    \addplotmyop{random_var_A.dat}
\end{myplot}%
\\[2mm]
% ========================
\begin{myplot}
    {
        title={(d)\quad $|\Theta|$},
        xmin=2, xmax=20,
        xtick={ 4, 8, 12, 16, 20},
        xticklabels={$4$,$8$,$12$,$16$,$20$},
    }
    \addplotmyop{random_var_Theta.dat}
\end{myplot}%\\[1mm]
&
% ========================
\begin{myplot}
    {
        title={(e)\quad $|S|$ ($n^*=0$)},
        yticklabels={\empty},
        xmin=2, xmax=20,
        xtick={ 4, 8, 12, 16, 20},
        xticklabels={$4$,$8$,$12$,$16$,$20$},
    }
    \addplotmyop{random_var_S.dat}
\end{myplot}%
&
% ========================
\begin{myplot}
    {
        title={(f)\quad $n^*$},
        yticklabels={\empty},
        xmin=0, xmax=9,
        xtick={0, 2, 4, 6, 8},
        xticklabels={$0$,$2$,$4$,$6$,$8$},
    }
    \addplotmyop{random_var_n_termin.dat}
\end{myplot}%
\end{tabular}
%\vspace{-3mm}
\ref{leg:all}
\caption{Comparison of signaling strategies: all results are shown as ratios to {\sc FullControl} on the y-axes. Meanings of x-axes are noted in the captions. Shaded areas represent standard deviations (mean $\pm$ standard deviation). 
In all figures, we fix $|S| = |\Theta| = |A| = 10$, $\gamma = \tilde{\gamma} = 0.8$, $n^*=5$, and $\beta=0$ unless they are variables.}
\label{fig:exp}
\end{figure*}
\end{NoHyper}

% ================== ROAD NAVIGATION ========================

% gamma_pr = gamma_ag = 0.8, node = 20, edge = 100, theta = 3, beta = -1,...,1
\begin{filecontents*}{road_nav_beta.dat}
x	noSigMyop	noSigFS	optSigMyop	optSigAM	StdDev_noSigMyop	StdDev_noSigFS	StdDev_optSigMyop	StdDev_optSigAM
-1	0.235484587	0.135339524	0.304960976	0.135339524	0.123599517	0.046982066	0.136019488	0.046982066
-0.75	0.275589737	0.161245776	0.341089106	0.17157492	0.131741555	0.047490833	0.134315535	0.050176361
-0.5	0.263730238	0.185063871	0.374211054	0.24002062	0.119633751	0.050524314	0.127891952	0.062394777
-0.25	0.275554223	0.42163609	0.403257298	0.487772839	0.090482971	0.190597876	0.132243131	0.154409508
0	0.38277761	0.538663166	0.493349646	0.643188438	0.213313633	0.215416973	0.166365499	0.162900862
0.25	0.391453761	0.623241508	0.597314816	0.729025998	0.194773055	0.197319965	0.168184086	0.161612006
0.5	0.40159215	0.658510602	0.702008504	0.809816162	0.174348387	0.160569882	0.159014312	0.135973234
0.75	0.468315126	0.671911124	0.840648049	0.892243256	0.170589768	0.126786155	0.117802551	0.107121227
1	0.49439441	0.693888419	0.870686848	0.973369958	0.195621223	0.146674486	0.143840131	0.046328364
\end{filecontents*}

% gamma_pr = gamma_ag = 0.8, node = 20, edge = 100, theta = 2,...,10, beta = 0.5
\begin{filecontents*}{road_nav_Theta.dat}
x	noSigMyop	noSigFS	optSigMyop	optSigAM	StdDev_noSigMyop	StdDev_noSigFS	StdDev_optSigMyop	StdDev_optSigAM
2	0.469684439	0.623985612	0.587784581	0.708566823	0.177977924	0.185349983	0.155238047	0.170341677
3	0.40159215	0.658510602	0.702008504	0.809816162	0.174348387	0.160569882	0.159014312	0.135973234
4	0.44984363	0.65670075	0.774027572	0.836931266	0.210880679	0.169139363	0.151433554	0.093241907
5	0.377754243	0.657235676	0.791298129	0.821745751	0.143431378	0.093255928	0.129948201	0.088564277
6	0.391048949	0.628696453	0.843282885	0.844736486	0.142576273	0.118935638	0.107314696	0.097823153
7	0.376765548	0.638375288	0.880325173	0.851486449	0.148075096	0.11107207	0.072342182	0.064229132
8	0.374565755	0.638346607	0.880491229	0.836791266	0.159914477	0.10891341	0.073278487	0.07697908
9	0.353435381	0.638250455	0.891490623	0.838720239	0.123796971	0.095535172	0.060138421	0.062740644
10	0.361211395	0.623272482	0.918996818	0.8575984	0.123746368	0.102901821	0.064445781	0.067118498
\end{filecontents*}

% gamma_pr = gamma_ag = 0.8, node = 20, edge = 20,...100, theta = 3, beta = 0.5
\begin{filecontents*}{road_nav_road.dat}
x	noSigMyop	noSigFS	optSigMyop	optSigAM	StdDev_noSigMyop	StdDev_noSigFS	StdDev_optSigMyop	StdDev_optSigAM
20	0.809485219	0.927062702	0.946208704	0.965688596	0.166968549	0.083310279	0.086943109	0.050277274
30	0.746774944	0.866513726	0.903744424	0.945067661	0.180682084	0.126925087	0.102288106	0.078479527
40	0.603580816	0.819101972	0.847241744	0.908757105	0.18946332	0.137109009	0.11634613	0.082009706
50	0.585744639	0.778012773	0.842586777	0.872157888	0.208719711	0.14778932	0.141145837	0.122102934
60	0.583608432	0.740947899	0.816149973	0.876945148	0.198499949	0.139423595	0.119855132	0.116168547
70	0.523885394	0.715945433	0.799681193	0.857186574	0.192158892	0.167211424	0.132863782	0.113518045
80	0.459480111	0.64845357	0.718691236	0.808240418	0.185907468	0.163919288	0.167883474	0.136876758
90	0.397862718	0.687150442	0.706613389	0.821769128	0.149190902	0.154245742	0.139004661	0.128500192
100	0.40159215	0.658510602	0.702008504	0.809816162	0.174348387	0.160569882	0.159014312	0.135973234
\end{filecontents*}

\begin{NoHyper}
\begin{figure*}
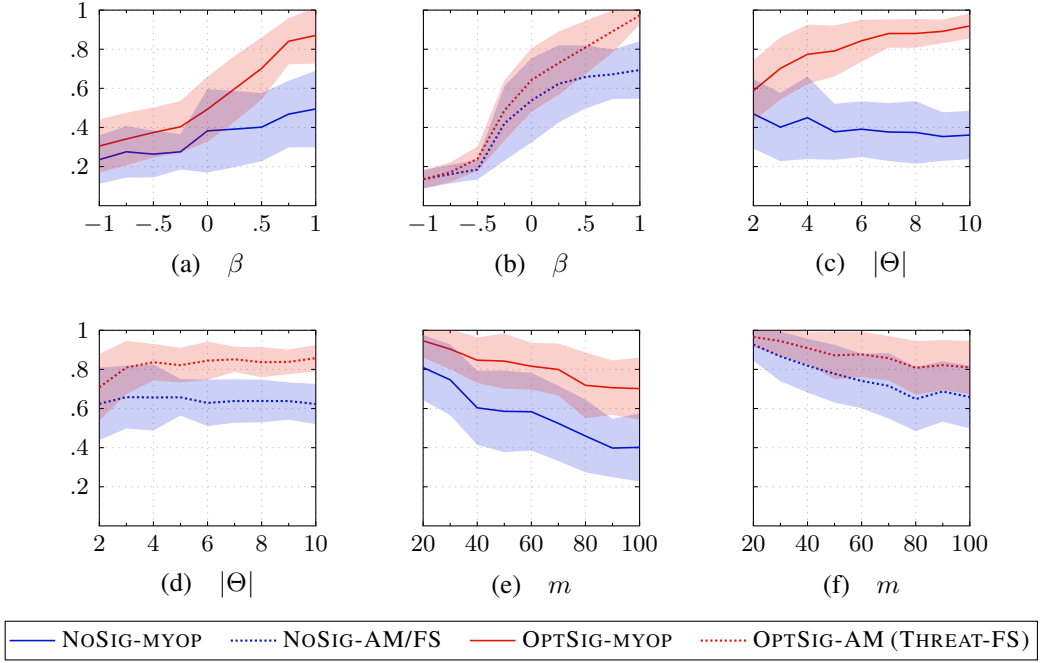

%\vspace{-2mm}
\centering
\begin{tabular}{@{\hspace{0mm}}l@{\hspace{0mm}}l@{\hspace{0mm}}l}
% ========================
\begin{myplot}
    {
        title={(a)\quad $\beta$},
    }
    \addplotmyop{road_nav_beta.dat}
\end{myplot}%\\[.5mm]
&
\begin{myplot}
    {
        title={(b)\quad $\beta$},
        yticklabels={\empty},
    }
    \addplotfs{road_nav_beta.dat}
\end{myplot}
&
\begin{myplot}
    {
        title={(c)\quad $|\Theta|$},
        yticklabels={\empty},
        xmin=2, xmax=10,
        xtick={ 2, 4, 6, 8, 10},
        xticklabels={$2$, $4$, $6$, $8$, $10$},
    }
    \addplotmyop{road_nav_Theta.dat}
\end{myplot}
\\[2mm]
\begin{myplot}
    {
        title={(d)\quad $|\Theta|$},
        xmin=2, xmax=10,
        xtick={ 2, 4, 6, 8, 10},
        xticklabels={$2$, $4$, $6$, $8$, $10$},
    }
    \addplotfs{road_nav_Theta.dat}
\end{myplot}
&
\begin{myplot}
    {
        title={(e)\quad $m$},
        yticklabels={\empty},
        xmin=20, xmax=100,
        xtick={ 20, 40, 60, 80, 100},
        xticklabels={$20$, $40$, $60$, $80$, $100$},
    }
    \addplotmyop{road_nav_road.dat}
\end{myplot}
&
\begin{myplot}
    {
        title={(f)\quad $m$},
        yticklabels={\empty},
        xmin=20, xmax=100,
        xtick={ 20, 40, 60, 80, 100},
        xticklabels={$20$, $40$, $60$, $80$, $100$},
    }
    \addplotfs{road_nav_road.dat}
\end{myplot}
% ========================
\end{tabular}
\ref{leg:all}
%\vspace{-4mm}
\caption{Comparison of signaling strategies in a navigation application: 
all results are shown as the ratios of {\sc FullControl} to them on the y-axes (now that rewards are costs).
All curves and axes have the same meanings as in Figure~\ref{fig:exp}. 
All results are obtained on instances with $n=20$ and $m=100$ (i.e., numbers of nodes and edges in the network), where we also fix $|\Theta| =3$, $\gamma = \tilde{\gamma} = 0.8$, and $\beta=0.5$ unless they are variables.
}
%\vspace{-2mm}
\label{fig:exp_nav}
\end{figure*}
\end{NoHyper}

% gamma_pr = gamma_ag = 0.8, node = 20, edge = 100, theta = 3, beta = -1,...,1
\begin{filecontents*}{road_nav_beta_sorted.dat}
x	noSigMyop	noSigFS	optSigMyop	optSigAM	StdDev_noSigMyop	StdDev_noSigFS	StdDev_optSigMyop	StdDev_optSigAM
-1	0.221723457	0.134763506	0.290857479	0.134763503	0.093647848	0.030124539	0.100329362	0.030124543
-0.75	0.237665985	0.137245174	0.296903587	0.145024423	0.129459437	0.036476053	0.129357746	0.039633269
-0.5	0.270951855	0.162882778	0.335615168	0.190338423	0.131975017	0.041084124	0.131183355	0.04845391
-0.25	0.277983286	0.357222038	0.335990003	0.402445609	0.146002826	0.186431603	0.156303425	0.17214246
0	0.284855272	0.507036909	0.395283062	0.555721723	0.111725832	0.1563975	0.148500598	0.143972521
0.25	0.350252465	0.621319164	0.527554955	0.681973149	0.162997501	0.176370951	0.161848569	0.163808803
0.5	0.409276192	0.662223868	0.700159334	0.813182235	0.181638487	0.140917137	0.15751278	0.10540158
0.75	0.461790871	0.671775549	0.830478001	0.907112039	0.195257899	0.130005998	0.131582084	0.09063759
1	0.419646966	0.682088316	0.857145574	0.972814044	0.14863389	0.139931102	0.132512129	0.042710577
\end{filecontents*}

% gamma_pr = gamma_ag = 0.8, node = 20, edge = 100, theta = 2,...,10, beta = 0.5
\begin{filecontents*}{road_nav_Theta_sorted.dat}
x	noSigMyop	noSigFS	optSigMyop	optSigAM	StdDev_noSigMyop	StdDev_noSigFS	StdDev_optSigMyop	StdDev_optSigAM
2	0.398420045	0.607038249	0.54708801	0.707373546	0.155442268	0.207707686	0.169508932	0.179866106
3	0.409276192	0.662223868	0.700159334	0.813182235	0.181638487	0.140917137	0.15751278	0.10540158
4	0.382013421	0.648885157	0.749461484	0.780321791	0.161622672	0.132617554	0.154063838	0.119400716
5	0.394940579	0.635739858	0.78977283	0.820276568	0.143251156	0.122708691	0.106737087	0.099195596
6	0.428215254	0.623333916	0.859127941	0.843671859	0.173178278	0.123377904	0.098914451	0.090612923
7	0.384542628	0.613497647	0.900652728	0.823059645	0.141675004	0.096258959	0.072972056	0.09225647
8	0.37825432	0.651378321	0.901669042	0.84229109	0.134138609	0.096644781	0.08091742	0.081399128
9	0.393231257	0.646047753	0.927107356	0.80601389	0.179223351	0.120646134	0.057608115	0.098525747
10	0.382329452	0.605900303	0.942877541	0.841835467	0.128091396	0.093147766	0.054688609	0.082646908
\end{filecontents*}

% gamma_pr = gamma_ag = 0.8, node = 20, edge = 20,...100, theta = 3, beta = 0.5
\begin{filecontents*}{road_nav_road_sorted.dat}
x	noSigMyop	noSigFS	optSigMyop	optSigAM	StdDev_noSigMyop	StdDev_noSigFS	StdDev_optSigMyop	StdDev_optSigAM
20	0.801000253	0.931368156	0.942741362	0.966810519	0.163001495	0.083898246	0.08337716	0.045020018
30	0.691797586	0.871982568	0.88306977	0.93705552	0.182971123	0.118473624	0.129466228	0.06376423
40	0.650311413	0.803753756	0.866168667	0.916785303	0.186338665	0.131411356	0.137773431	0.087412439
50	0.581531317	0.784873122	0.838831118	0.896578444	0.195733559	0.144203879	0.137414903	0.107440145
60	0.554061995	0.754111353	0.814123395	0.867974967	0.189359025	0.180873731	0.163806575	0.120481113
70	0.499665322	0.696040294	0.765212011	0.842228052	0.211251095	0.156293711	0.172995715	0.128278222
80	0.408607257	0.69851127	0.739714482	0.837613987	0.1754261	0.174745274	0.186867227	0.12599074
90	0.451912384	0.681323929	0.726880983	0.82542576	0.24604927	0.170312234	0.20667988	0.149171288
100	0.409276192	0.662223868	0.700159334	0.813182235	0.181638487	0.140917137	0.15751278	0.10540158
\end{filecontents*}

\section{Experiments}
\label{sec:experiments}

We empirically evaluate signaling strategies obtained with our algorithms. 
The goal is to compare the payoffs yielded for the principal.
We use Python (v3.9) to implement our algorithms and Gurobi (v9.1.2) to solve all the LPs. All results were obtained on a platform with a 2 GHz Quad-Core CPU and 16 GB memory, and each is averaged over at least 20 instances.
%The source code for the experiments can be found in the supplementary material.
We conduct experiments on (i) general instances without any specific underlying structure, and (ii) instances generated based on a road navigation application.
%Due to the space limit, we only present the second set of results and leave results obtained on general instances to the appendix.

\subsection{General Instances}

The first set of instances are generated as follows. The transition probabilities and the initial state distribution are generated uniformly at random (and normalized to ensure that they sum up to $1$). 
We also set an integer parameter $n^*$, and change $n^*$ states to terminal states.
The reward values are first generated uniformly at random from the range $[0,1]$.
Then, we tune the agent's rewards according to a parameter $\beta \in [-1, 1]$, resetting $\widetilde{R}(s, \theta, a) \leftarrow (1 - |\beta|) \cdot \widetilde{R}(s, \theta, a) + \beta \cdot {R}(s, \theta, a)$.
Hence, when $\beta = 0$, the agent's rewards are independent of the principal's; when $\beta = 1$, they are completely aligned; and when $\beta = -1$, they are zero-sum.

%The results are show in Figure~\ref{fig:exp}, where 
We evaluate optimal signaling strategies against a myopic and an AM agent; the latter is equivalent to our threat-based strategy against an FS agent ({\sc Threat-FS}).
We use two benchmarks, which are by nature also the lower and upper bounds of payoffs of other strategies:
i) when the principal cannot send any signal and the agent operates with only the prior knowledge ({\sc NoSig-myop} and {\sc NoSig-AM/FS}; AM and FS agents have the same behavior in this case);
and
ii) when the principal has full control over the agent ({\sc FullControl}).
For ease of comparison, all results are shown as their ratios to results of {\sc FullControl}.
%It appears that patterns of the results do not vary significantly with the discount factors, so we fix them to $0.8$ in the set of results we present. Instead, we run experiments with different numbers of terminate states. 

Figure~\ref{fig:exp} summarizes the results.
It is clearly seen that {\sc OptSig} improves significantly upon {\sc NoSig} in all figures.
%, irrespective of the agent's type.
The gap appears to increase with $\beta$, and when $\beta \ge 0$ (when the agent's rewards are positively correlated to that of the principal), {\sc OptSig} is very closed to {\sc FullControl}. 
%(achieving more than around $90\%$ of the payoff on average). 
It is also noted that differences between results obtained in the myopic setting and in the FS/AM setting are very small (e.g., compare (a) and (b)). This is mainly due to the random nature of the instances: in expectation, future rewards of all actions are the same.
Hence, in the remaining figures we only present results obtained in the myopic setting.
As shown in these figures, payoff improvement offered by the optimal strategies increases slowly with the number of actions and the number of external parameters. Intuitively, as these two numbers increase, the agent's decision making in each state becomes more reliant on advice from the principal.
Nevertheless, the results do not appear to vary insignificantly with other parameters, such as the number of states or the number of terminal states as shown in (e) and (f) (also see %Appendix~\ref{sc:additional_exp} 
the appendix for additional experiment results).

\subsection{Road Navigation Instances}

%We also tested our signaling strategies in a navigation application.
In the navigation application, the agent wants to travel from a starting node to a destination node on a road network, and is free to choose any path.
In each step, the agent picks a road at the current node and travels through it. 
%Hence, the agent's position defines the state. 
The reward the agent receives at each step is a {\em cost} representing the travel time through the chosen road, which depends on the congestion level represented by the external parameter.
The principal, as a social planner, has a preference over the path the agent picks (e.g., in consideration of the overall congestion or noise levels across the city), and this is encoded in a reward function for the principal: whenever the agent picks a road, the principal also receives a cost according to this reward function. 
Naturally, the agent's position (the node the agent is on) defines the state of the MDP. 
For simplicity, we assume that the road network is a directed acyclic graph (DAG), so the agent always reaches the destination in a finite number of steps. 

To generate an instance, we first generate a random DAG with specified numbers of nodes and edges (roads). Let these numbers be $n$ and $m$, respectively ($n \le m \le \frac{n(n-1)}{2}$). We sample a Prüfer sequence of length $n-2$ uniformly at random and then convert it into the corresponding tree. 
We index the nodes according to their order in a breadth-first search.
The node with the smallest/largest index is chosen as the start/destination.
Then we add an edge between a pair of nodes chosen uniformly at random, from the node with the smaller index to the node with the larger index, until there are $m$ edges on the graph. In the case that some node has no outgoing edge and it is not the destination, we also add an edge linking this node to the destination, so the graph generated may actually have more than $m$ edges.
In this way the graph generated is always a DAG.

The results are presented in Figure~\ref{fig:exp_nav}. The results exhibit similar patterns to their counterparts in Figure~\ref{fig:exp}.
%and \ref{fig:additional_exp}.
Nevertheless, the gaps between different strategies appear to be narrower in the FS/AM setting than those in the myopic setting, which is not obvious in Figure~\ref{fig:exp}.

\section{Conclusion}
\label{sec:conclusions}

We described and studied a dynamic model of persuasion in infinite horizon Markov processes.
Our main results characterize the nature and computational complexity of optimal signaling against different types of agents.
A limitation of the current model is that it requires common knowledge of transitions and rewards; hence, studying online versions of our problem \cite{castiglioni2020online} is an immediate future step.
%\paragraph{Broader Impact}
While we focus on the algorithmic aspects of persuasion, our results indicate how a social planner might influence agents optimally. 
In particular implementations, the planner's incentives may not be aligned with societal benefits. In these cases, a careful analysis of the persuasion mechanisms and their moral legitimacy must be considered.

%\clearpage

\section*{Acknowledgments}

This research was sponsored in part by the Deutsche Forschungsgemeinschaft project 389792660 TRR 248--CPEC.
Jiarui Gan was  supported  by  the European Research Council (ERC) under the European Union’s Horizon 2020 research and innovation programme (grant agreement No. 945719).

\clearpage
\onecolumn

\appendix

\section{Omitted Proofs}

\subsection{Proof of Theorem~\ref{thm:harness-optfs}}
\label{sc:proof-hardness}

%\begin{proof}

We show a reduction from the {\sc Maximum Independent Set} problem ({\sc Max-Ind-Set}).
An instance of {\sc Max-Ind-Set} is given by an undirected graph $G= (E,N)$. The goal is to find an {\em independent set} of the maximum size: a set of nodes $N' \subseteq N$ is said to be an independent set if for every pair of nodes $v,u \in N'$ it holds that $\{v, u\} \notin E$.
It is known that {\sc Max-Ind-Set} admits no efficient $(1/m + \epsilon)$-approximation algorithm unless P~=~NP, where $m = |N|$ \cite{zuckerman2006linear}.
The approximation ratios we consider are all multiplicative.
Since the objective value (cumulative reward) might be negative, to make the ratios meaningful, we adjust the objective value by subtracting from it a benchmark value that equals the cumulative reward of the principal when no signal is used. This is a value that can be trivially obtained by the principal and the value of an optimal solution is always at least as large and hence non-negative after the adjustment.

\subsection*{The Reduction}
We only consider the case where $\tilde{\gamma}$ is a constant in $(0, 1/2)$ but note that the reduction easily extends to the case where $\tilde{\gamma} \in [1/2, 1)$. Specifically, we can modify the reduction by inserting a constant number $l = \left \lceil \log_{\tilde{\gamma}}{\frac{1}{2}} \right\rceil$ of dummy states $s'_1, \dots, s'_l$ before each state in the MDP constructed below: in each $s'_\ell$, any action taken results in a deterministic state transition from $s'_\ell$ to $s'_{\ell+1}$ (and from $s'_\ell$ to $s$ when $\ell = l$). 
We also change all occurrence of $\tilde{\gamma}$ in the definition of rewards to $\tilde{\gamma}'$.
This creates an MDP equivalent to the one used in the reduction but with a discount factor $\tilde{\gamma}' < 1/2$. 

Given a {\sc Max-Ind-Set} instance $G=(E,N)$, we construct an MDP illustrated in Figure~\ref{fig:reduction-opt-signaling-FS}, where
\begin{itemize}
\item For each $v\in N$, there are three states $s_v$, $s'_v$, and $s''_v$. In addition, there is a terminal state $s_X$.
\item The initial state is sampled from a uniform distribution over states $s_v$, $v \in N$.
\item For each pair of {\em adjacent} nodes $u$ and $v$, i.e., $\{u,v\}\in E$, there is an action $a_{u,v}$; taking $a_{u,v}$ leads the state transitioning to $s'_u$.
There is no such an action if $u$ and $v$ are not adjacent.
\item The external parameter has two possible values: $\Theta = \{\theta_a, \theta_b\}$. 
\item The agent's reward for each state-action pair is annotated on the corresponding edge in Figure~\ref{fig:reduction-opt-signaling-FS}. It depends on the external parameter only in states $s'_v$, $v \in N$, where the values are given by a function $\rho$ presented on the right. Namely, action $a$ (respectively, $b$) is more profitable when the external parameter is $\theta_a$ (respectively, $\theta_b$). We set the prior distribution of the external parameter to a uniform distribution: $\mu_{s'_v} (\theta_a) = \mu_{s'_v} (\theta_b) = 0.5$.
\end{itemize}
Hence, according to the last point above, the principal can persuade the agent only in states $s'_v$, $v \in N$.
It is not hard to see that through persuasion the principal is able to control the agent's reward between states $s'_v$ and $s''_v$ within the range $[0, 1]$. 
In more detail, reward $0$ corresponds to the case where the principal gives completely uninformative signals (e.g., always sending signal $a$), when the agent can only rely on the prior belief and obtains expected reward $0$; and reward $1$ corresponds to the case where the principal always reveals the true information, when the agent can always pick the correct action to obtain reward $1$.

Finally, we need to specify the principal's reward. 
\begin{itemize}
\item
Let the principal's reward be $1$ only for state-action pairs $(s''_v, b)$, $v\in N$;
for all other state-action pairs we let the reward be $0$.
In other words, the principal's payoff depends on the number of states $s''_v$ at which the agents takes action $b$.
\end{itemize}
%The principal's rewards are also all non-negative in this way to make the multiplicative approximation ratio meaningful.
It can be easily verified that when no signal is used, the agent always prefers to take action $a$ in every state $s_v$, $v\in N$, resulting in cumulative reward $0$ for the principal. Hence, the objective value we consider remains the same after we subtract from it this benchmark cumulative reward.

\definecolor{statefill}{RGB}{204,208,246}

\begin{figure}
\centering
\tikzset{
->, % makes the edges directed
>={Stealth[scale=1.0]}, % makes the arrow heads bold
node distance=2cm, % specifies the minimum distance between two nodes. Change if necessary.
every state/.style={thick, fill=gray!10}, % sets the properties for each ’state’ node
initial text= {$prob.=1/m$}, % sets the text that appears on the start arrow
every edge/.append style={semithick},
initial where = above,
}
\begin{tikzpicture}[baseline={(current bounding box.center)}]
\tikzstyle{every node}=[font=\small] 

% \node[state, initial, initial where=above] (s0) {$s_0$};
% \node[state, below of=s0, yshift=5mm, scale=0.2] (a0) {};

\node[state, initial] (sv) {$s_v$};
\node[state, initial, right of=sv, xshift=20mm] (su) {$s_u$};
\node[left of=sv, xshift=0mm, yshift=2mm] (dotl) {$\dots~\dots$};
\node[right of=sv, xshift=0mm, yshift=2mm] (dotm) {$\dots~\dots$};
\node[right of=su, xshift=0mm, yshift=2mm] (dotr) {$\dots~\dots$};

\node[state, very thick, fill=statefill, below of=sv, yshift=3mm] (sv1) {$s'_v$};
\node[state, very thick, fill=statefill, below of=su, yshift=3mm] (su1) {$s'_u$};
\node[state, below of=sv1, yshift=-5mm] (sv2) {$s''_v$};
\node[state, below of=su1, yshift=-5mm] (su2) {$s''_u$};
%\node[state, below left of=sv1, xshift=-3mm, yshift=3mm] (sv3) {$s_v^*$};
%\node[state, below right of=su1, xshift=3mm, yshift=3mm] (su3) {$s_u^*$};
%\node[state, below left of=a0, xshift=60mm] (sm) {$s_m$};

% \node[state, accepting, left of=sv1, xshift=-3mm] (svX) {};
% \node[state, accepting, right of=su1, xshift=3mm] (suX) {};

\node[state, accepting, below of=sv2, yshift=0mm, xshift=20mm] (sX) {$s_X$};

\draw 
% (s0) edge[left] node{$a, 0$} (a0)
% (a0) edge[dotted, thick, left, bend right] node{$\frac{1}{m}$} (sv)
% (a0) edge[dotted, thick, right, bend left] node{$\frac{1}{m}$} (su)
% (a0) edge[dotted, thick, bend right, left] node{$prob=\frac{1}{m}$} (dotl)
% (a0) edge[dotted, thick, bend left, above] node{$\frac{1}{m}$} (dotr)

(sv) edge[left] node{$b, 0$} (sv1)
(su) edge[right] node{$b, 0$} (su1)
(sv1) edge[out=-150,in=150, left] node{$a, \rho(a,\theta)$} (sv2)
(sv1) edge[left] node{$b$} node[right]{$\rho(b,\theta)$} (sv2)
(su1) edge[right] node{$b$} node[left]{$\rho(b,\theta)$} (su2)
(su1) edge[out=-30,in=30, right] node{$a, \rho(a,\theta)$} (su2)
%(sv1) edge[above] node{$y$} (sv3)
%(su1) edge[above] node{$y$} (su3)

(sv2) edge[out=0,in=180] node[sloped, pos=0.15, below]{$a_{v,u}, 0$}
%node[sloped,pos=0.2,above]{$a_{v,u}, 0$} 
(su1)
(su2) edge[out=180,in=0] node[sloped, pos=0.15, below]{$a_{u,v}, 0$}
%node[sloped,pos=0.2,above]{$a_{u,v}, 0$} 
(sv1)

(sv) edge[out=200,in=190,looseness=1.9] node[pos=0.1,left]{$a, \tilde{\gamma}$} (sX)
(su) edge[out=-20,in=-10,looseness=1.9] node[pos=0.1,right]{$a, \tilde{\gamma}$} (sX)

%(sv3) edge[loop below] node{} (sv3)
%(su3) edge[loop below] node{} (su3)
(sv2) edge[out=-90,in=170, left] node{} node[pos=0.4,left]{$b, \tilde{\gamma}^2$} (sX)
(su2) edge[out=-90,in=10, right] node{} node[pos=0.4,right]{$b, \tilde{\gamma}^2$} (sX)
%(sX) edge[loop below] node{$1\,(0)$} (sX)
;
\end{tikzpicture}
\hspace{-10mm}
\begin{tabular}{ crr } 
\toprule
  & $\theta_a$ & $\theta_b$ \\
\midrule
$a$ & $1$  & $-1$ \\ 
$b$ & $-1$ & $1$ \\
\bottomrule \\[-2mm]
\multicolumn{3}{c}{value of $\rho(\cdot, \cdot)$}
\end{tabular}
\caption{Reduction from {\sc Max-Ind-Set}. 
The two nodes $u$ and $v$ illustrated above are adjacent, i.e., $\{u,v\} \in E$.
All state transitions are deterministic. Annotated on each edge are the action name and the reward of the agent for the corresponding state-action pair. Only the rewards related to states $s'_v$, $v\in N$ depend on the external parameter, whose values are given by $\rho$ presented on the right.
\label{fig:reduction-opt-signaling-FS}}
\end{figure}
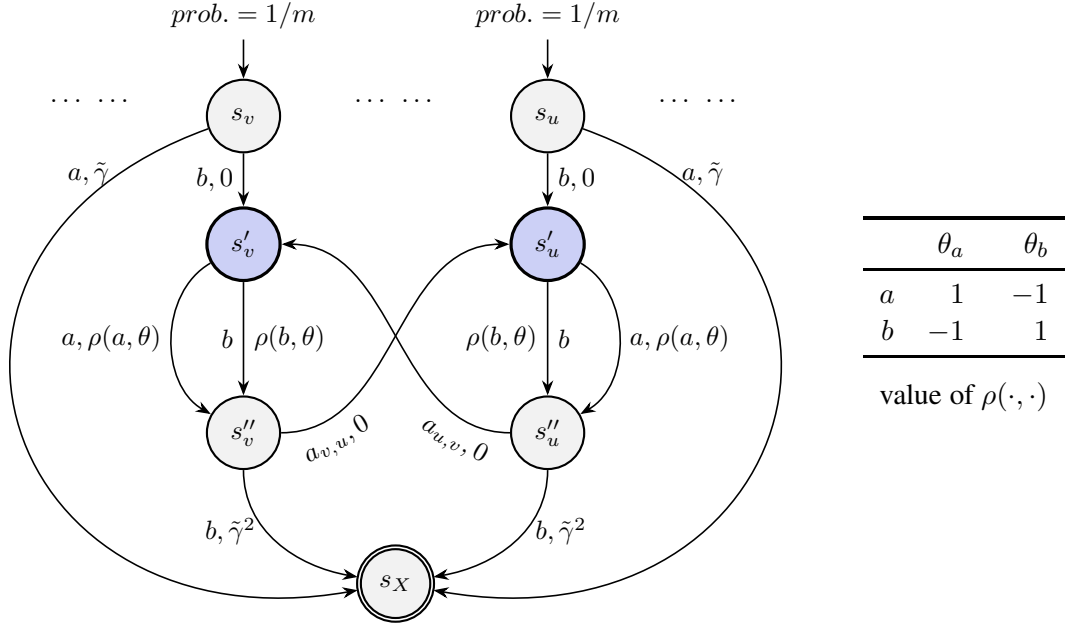
 
\subsection*{Correctness of the Reduction}
We show that there is a size-$k$ independent set on $G$ if and only if there is a signaling strategy that gives the principal payoff $k\cdot \gamma^2/m$ (and the conversion between the independent set and this signaling strategy can be done in polynomial time).
As a result, an efficient approximation algorithm for computing an optimal signaling strategy, if exists, can be efficiently turned into a one for {\sc Max-Ind-Set} while the approximation ratio is preserved. 

\paragraph{The ``only if'' direction.}
Suppose that there is a size-$k$ independent set $N^* \subseteq N$.
We show that the following signaling strategy gives the principal payoff $k\cdot  {\gamma}^2/m$.
\begin{itemize}
\item
In each state $s'_v$, $v\in N^*$, the principal always reveals the true information, recommending the agent to take the correct action with reward $1$. This gives the agent reward $1$ in expectation at that step.
\item
In each state $s'_v$, $v\in N \setminus N^*$, the principal reveals no information, so the agent can only rely on their prior knowledge to obtain an expected reward of $0$.
\end{itemize}

It can be verified that, given that $\tilde{\gamma} < 1/2$, when the above signling strategy is applied, the agent will be (strictly) incentivized to follow the following trajectory at each $s_v$:
\begin{equation*}
%\label{eq:reduction-traj}
\begin{cases}
s_v \to s'_v \to s''_v \to s_X, & \text{ if } v \in N^* \\
s_v \to s_X, & \text{ if } v \in N \setminus N^*
\end{cases}
\end{equation*}
% if $v \in N^*$:
% \[
% s_v \to s'_v \to s''_v \to s_X;
% \]
% and the following trajectory if $v \in N \setminus N^*$:
% \[
% s_v \to s_X.
% \]
This results in payoff $|N^*|\cdot  {\gamma}^2/m = k\cdot {\gamma}^2/m$ for the principal.
Indeed, we argue that by following the above trajectories, the state value function of the agent is as follows; one can easily verify that the actions the agent takes according to the above trajectories are indeed (strictly) optimal with respect these state values.
\begin{itemize}
\item $V(s_X) = 0$.
\item For all $v \in N^*$, we have $V(s''_v) = \tilde{\gamma}^2$, $V(s'_v) = 1 + \tilde{\gamma}^3$, and $V(s_v) = \tilde{\gamma} + \tilde{\gamma}^4$.
\item For all $v \in N \setminus N^*$, we have $V(s'_v) < 1$.
%$V(s''_v) \le \tilde{\gamma} + \tilde{\gamma}^4$, and hence $V(s'_v) \le \tilde{\gamma}^2 + \gamma^5 < 1$ (by assumption $\tilde{\gamma} < 1/2$).
\end{itemize}
In particular, to see that $V(s'_v) < 1$ for every $v \in N \setminus N^*$, even though we did not specify the trajectory after $s'_v$, we can assume that in general the trajectory takes the following form for some $l \in \mathbb{N} \cup \{+\infty\}$:
\[T_l : s'_v \to s''_v \to s'_{u_1} \to s''_{u_1} \to s'_{u_2} \to s''_{u_2} \to \dots \to s'_{u_l} \to s''_{u_l} \to s_X.\] 
Namely, it visits $l$ state pairs $(s'_u, s''_u)$ before it moves to the terminal state.
Observe that in the section between $s'_{u_1}$ and $s''_{u_l}$ in $T_l$, the agent obtains reward $1$ (i.e., reaches some $s'_u$, $u \in N^*$) in at least every four steps given that $N^*$ is an independent set.
Moreover,
since $\tilde{\gamma}^2 < \frac{\tilde{\gamma}^2}{1- \tilde{\gamma}^4} = \tilde{\gamma}^2 + \tilde{\gamma}^6 + \tilde{\gamma}^{10} + \dots$, when the reward $\tilde{\gamma}$ obtained between $s''_{u_l}$ and $s_X$ is at most the cumulative reward the agent would obtain if the trajectory continued after $s_X$ and the agent received reward $1$ in every four steps starting from the second step after $s''_{u_l}$.
Hence, overall, the cumulative reward of $T_l$ is at most that the agent would obtain if they received reward $1$ in every four steps starting from the third step (i.e., after $s'_{u_1}$), that amounts to $\frac{\tilde{\gamma}^2}{1 - \tilde{\gamma}^4} < 1$ (given the assumption that $\tilde{\gamma} < 1/2$).

\paragraph{The ``if'' direction.}
Suppose that some signaling strategy $\pi$ (not necessarily deterministic) gives the principal payoff at least $k\cdot {\gamma}^2/m$. We show that there exists a size-$k$ independent set (and it can be found efficiently).

Let $\sigma$ be an arbitrary optimal policy of the agent in response to $\pi$.
Let $N^* \subseteq N$ contains vertices $v$ such that starting from $s_v$, the agent reaches $s_X$ with a positive probability by following $\sigma$.
Since the principal's cumulative reward is at least $k\cdot {\gamma}^2/m$, it must be that $|N^*| \ge k$.
Note that given $\pi$, both $\sigma$ and $N^*$ can be computed efficiently. We argue that $N^*$ is an independent set to complete the proof.

Indeed, consider any $v \in N^*$.
Since the agent reaches $s_X$ from $s_v$ with a positive probability and $\sigma$ is optimal, the agent must at least weakly prefer action $b$ to any $a_{v,u}$ in state $s''_v$;
it holds for the corresponding Q-function of $\sigma$ that $Q^\sigma(s''_v, b) \ge Q^\sigma(s''_v, a_{v,u})$ for all $u \in N$.
Trivially, we have 
\[Q^(s''_v, b) = \widetilde{R}(s''_v, b) + \tilde{\gamma} \cdot V(s_X) = \tilde{\gamma}^2,\]
and 
\[Q(s''_v, a_{v,u}) = \widetilde{R}(s''_v, a_{v,u}) + \tilde{\gamma} \cdot V(s'_u) = \tilde{\gamma} \cdot V(s'_u).\]
Hence, $V(s'_u) \le \tilde{\gamma}$, which implies that 
\[
Q(s_u, b) 
%= \widetilde{R}(s_u, b) + \tilde{\gamma} \cdot V(s'_u) 
\le \tilde{\gamma}^2 < \tilde{\gamma} = Q(s_u, b).
\]
Consequently, $b$ is strictly worse than $a$ in state $s_u$, so starting from $s_u$ and applying $\sigma$ the agent will reach $s_X$ with probability zero; we have $u \notin N^*$. Since $u$ is an arbitrary vertex adjacent to $v$, it follows immediately that $N^*$ is an independent set.
This completes the proof.

\subsection*{Remarks on the Reduction}

First, the above reduction does not rely on any specific assumption about the tie-breaking behavior of the agent (which defines the policy the agent will choose when there are multiple best responses against $\pi$).

%\paragraph{Applicability to Other Problems}
Second, the reduction can be easily adapted to show the inapproximability of several related problems on reward poisoning (policy teaching) and commitment in stochastic games.
All these problems feature a two-player MDP between a principal and an agent; the players' actions jointly determine the rewards and state transition, and the principal is able to influence the agent's reward in some or all states beforehand (directly, or indirectly through committing to some strategy). The task is to find an optimal way for the principal to influence the agent, which maximizes the principal's discounted cumulative reward.
Indeed, the principal's strategy in the above reduction boils down to setting the overall reward for the agent between each pair of states $s'_v$ and $s''_v$ to a value in $[0,1]$.
To choose the values optimally is an optimal reward poisoning (policy teaching) problem, where the attacker (teacher) has the power to change rewards in these states.
This also corresponds to a stochastic game, where the principal's actions in each state $s'_v$ either give the agent a high payoff $1$ or a low payoff $0$, irrespective of the action the agent plays.
Indeed, it is known that computing an optimal commitment in a stochastic is NP-hard \cite{letchford2012computing} but it was not known how hard it is to approximate.

%\paragraph{Hardness of Computing Finite-memory Strategies}

The proof can also be modified to show the hardness of computing a $k$-memory strategy, simply by inserting $k$ dummy states before each state, so that the principal will always forget what happened in the previous non-dummy state.

% =================================================

\subsection{Proof of Theorem~\ref{thm:varpi-IC}}
\label{sec:pf-varpi-ic}

We show that the value function of $\sigma$ satisfies the Bellman equation.
Specifically, let $V^\sigma: S \times G \to \mathbb{R}$ be the value function of $\sigma$; we will show that the following Bellman equation holds for all $(s,g) \in S \times G$:
\[
\label{eq:V-sigma-bellman}
V^\sigma(s,g) = \max_{b \in A} \left(\widetilde{R}^\varpi((s,g), b) + \tilde{\gamma} \cdot \mathbb{E}_{(s',g') \sim P^\varpi((s,g),b, \cdot)} V^\sigma(s',g')\right).
%\tag{$\star$}
\]
which then implies that $\sigma$ is optimal.

Since $\sigma$ is IC, we have $\sigma(s,g_a) = \mathbf{\hat{e}}_a$. Hence, $V^\sigma$ is the (unique) solution to the following system of equations:
\begin{align}
V^\sigma(s,g_a) 
&= \widetilde{R}^\varpi((s,g_a), a) + \tilde{\gamma} \cdot \mathbb{E}_{(s',g') \sim P^\varpi((s,g_a), a, \cdot)} V^\sigma(s',g') \nonumber \hspace{-15mm}\\
&= \widetilde{R}^\pi((s,g_a), a) + \tilde{\gamma} \cdot \mathbb{E}_{(s',g') \sim P^\pi((s,g_a),a, \cdot)} V^\sigma(s',g'), && \text{for all } s \in S, a \in A
\label{eq:V-sigma-def}
\end{align}
where we replace $\widetilde{R}^\varpi$ and $P^\varpi$ with $\widetilde{R}^\pi$ and $P^\pi$, respectively, according to \eqref{eq:thm-varpi-tR-varpi} and \eqref{eq:thm-varpi-P-varpi-a}.

Consider each $(s,g) \in S \times G$ and the following two cases with respect to $g$.

\paragraph{Case 1.} 
$g \in G_A$. 

Suppose that $g = g_a$ ($a \in A$).
The following key lemma compares $V^\sigma$ and $\overline{V}$. 
Intuitively, $\overline{V}$ serves as a conservative estimate to the expected future reward yielded by $\sigma$.

\begin{lemma}
\label{lmm:V-sigma-tV}
$\mathbb{E}_{\theta \sim \mu_s, g\sim \pi_s(\theta)} V^\sigma(s,g) \ge \overline{V}(s,g_0)$ for all $s \in S$.
%$\mathbb{E}_{(s',g') \sim P^\varpi((s,g_a),a, \cdot)} \left( V^\sigma(s',g') - \overline{V}(s',g_0) \right) \ge 0$ for all $s \in S, a \in A$.
\end{lemma}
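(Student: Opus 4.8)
\textbf{Proof proposal for Lemma~\ref{lmm:V-sigma-tV}.}
The plan is to collapse the signal coordinate of the meta-states. Define the ``pre-signal'' quantity $W(s) := \mathbb{E}_{\theta\sim\mu_s,\, g\sim\pi_s(\theta)} V^\sigma(s,g)$; since $\pi$ is an action advice, $g$ ranges over $G_A$, so $W(s)=\sum_{\theta,a}\phi_s^\pi(\theta,a)\,V^\sigma(s,g_a)$. First I would take \eqref{eq:V-sigma-def}, substitute $\widetilde{R}^\pi$ via \eqref{eq:tR-star} and $P^\pi$ via \eqref{eq:P-star}, and use $\Pr(\theta\mid g_a,\pi_s)=\phi_s^\pi(\theta,a)/q_s(g_a)$ where $q_s(g_a):=\sum_{\theta}\phi_s^\pi(\theta,a)$ is the marginal probability of signal $g_a$. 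Because $P^\pi((s,g_a),a,(s',g'))=P(s,a,s')\cdot\mathbb{E}_{\theta\sim\mu_{s'}}\pi_{s'}(\theta,g')$, summing over $g'$ exactly reproduces $W(s')$; and because $q_s(g_a)\cdot\Pr(\theta\mid g_a,\pi_s)=\phi_s^\pi(\theta,a)$, the marginal weights cancel when we average over $g_a$. One obtains the fixed-point relation
\[
W(s)=\sum_{\theta,a}\phi_s^\pi(\theta,a)\Big(\widetilde{R}(s,\theta,a)+\tilde{\gamma}\sum_{s'\in S}P(s,a,s')\,W(s')\Big)\qquad\text{for all }s\in S .
\]

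Next I would rewrite this using $\widetilde{R}^+$ from \eqref{eq:tR-p}: adding and subtracting $\tilde{\gamma}\sum_{s'}P(s,a,s')\overline{V}(s',g_0)$ inside the parentheses turns the bracket into $\widetilde{R}^+(s,\theta,a)+\tilde{\gamma}\sum_{s'}P(s,a,s')\big(W(s')-\overline{V}(s',g_0)\big)$. Writing $\Delta(s):=W(s)-\overline{V}(s,g_0)$, this yields
\[
\Delta(s)=\Big(\sum_{\theta,a}\phi_s^\pi(\theta,a)\,\widetilde{R}^+(s,\theta,a)-\overline{V}(s,g_0)\Big)+\tilde{\gamma}\sum_{\theta,a}\phi_s^\pi(\theta,a)\sum_{s'\in S}P(s,a,s')\,\Delta(s') .
\]
So the lemma reduces to showing the first parenthesized term is nonnegative, i.e., that following the advice gives the agent at least its no-information one-step value.

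For that, I would use two facts. (i) From \eqref{eq:bellman-tV-g0} and \eqref{eq:tR-p}, $\overline{V}(s,g_0)=\max_{a'\in A}\mathbb{E}_{\theta\sim\mu_s}\widetilde{R}^+(s,\theta,a')$. (ii) Since $\pi$ solves {\sc OptSig-AM}, it is, equivalently, an IC action advice for the {\sc OptSig-myop} instance with reward $\widetilde{R}^+$, so \eqref{eq:ic} holds with $\widetilde{R}$ replaced by $\widetilde{R}^+$: for every signal $g_a$ with $q_s(g_a)>0$ and every $a'$, $\sum_\theta\Pr(\theta\mid g_a,\pi_s)\big(\widetilde{R}^+(s,\theta,a)-\widetilde{R}^+(s,\theta,a')\big)\ge0$. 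Fix the maximizing $a'$, multiply the IC inequality for each $g_a$ by $q_s(g_a)$, sum over $a$, and use $\sum_a q_s(g_a)\Pr(\theta\mid g_a,\pi_s)=\mu_s(\theta)$ (because $\pi_s(\theta,\cdot)$ is a distribution over $G_A$): this gives $\sum_{\theta,a}\phi_s^\pi(\theta,a)\widetilde{R}^+(s,\theta,a)\ge\mathbb{E}_{\theta\sim\mu_s}\widetilde{R}^+(s,\theta,a')=\overline{V}(s,g_0)$, so the first term above is $\ge0$.

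Hence $\Delta(s)\ge\tilde{\gamma}\sum_{\theta,a}\phi_s^\pi(\theta,a)\sum_{s'}P(s,a,s')\,\Delta(s')$ for all $s$. Since $S$ is finite, let $s^\star$ be a state minimizing $\Delta$; the right-hand side at $s^\star$ is $\tilde{\gamma}$ times a convex combination of values $\Delta(s')\ge\Delta(s^\star)$, so $\Delta(s^\star)\ge\tilde{\gamma}\,\Delta(s^\star)$, whence $(1-\tilde{\gamma})\Delta(s^\star)\ge0$ and, as $\tilde{\gamma}<1$, $\Delta(s^\star)\ge0$. Therefore $W(s)=\mathbb{E}_{\theta\sim\mu_s,\,g\sim\pi_s(\theta)}V^\sigma(s,g)\ge\overline{V}(s,g_0)$ for all $s$, which is the claim. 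I expect the main obstacle to be the first step: correctly collapsing the signal coordinate and verifying that the Bayesian weights cancel so that $W$ closes into a genuine state-space fixed-point equation, together with being careful that the operative IC constraints are those for $\widetilde{R}^+$ rather than $\widetilde{R}$. Once these are in place, the ``obedience never hurts the receiver'' inequality and the minimum-state argument are routine.
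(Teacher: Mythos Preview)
Your proof is correct and follows essentially the same route as the paper's: both collapse the signal coordinate to obtain a state-space fixed-point equation for the pre-signal value (your $W$, the paper's $U$), invoke the IC constraint with respect to $\widetilde{R}^+$ together with the identity $\overline{V}(s,g_0)=\max_{a'}\mathbb{E}_{\theta\sim\mu_s}\widetilde{R}^+(s,\theta,a')$ to get the one-step inequality, and then finish by a contraction argument. The only cosmetic difference is in that last step: the paper iterates $U-\overline{V}\ge (\tilde{\gamma}T)^n(U-\overline{V})$ and lets $n\to\infty$, whereas you pick a state $s^\star$ minimizing $\Delta$ and use the convex-combination bound $\Delta(s^\star)\ge\tilde{\gamma}\Delta(s^\star)$---both are standard ways to conclude from $\Delta\ge\tilde{\gamma}T\Delta$ with $T$ row-stochastic.
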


\begin{proof}
For ease of description, let $U(s) := \mathbb{E}_{\theta \sim \mu_s, g\sim \pi_s(\theta)} V^\sigma(s,g)$
for all $s \in S$.

According to \eqref{eq:V-sigma-def}, we have
\begin{align}
V^\sigma(s,g_a) 
%&= \widetilde{R}^\pi((s,g_a), \sigma(s,g_a)) + \tilde{\gamma} \cdot \mathbb{E}_{(s',g') \sim P^\pi((s,g_a), \sigma(s,g_a), \cdot)} V^\sigma(s',g') \\
&= \widetilde{R}^\pi((s,g_a), a) + \tilde{\gamma} \cdot \mathbb{E}_{(s',g') \sim P^\pi((s,g_a),a, \cdot)} V^\sigma(s',g'),
\label{eq:V-sigma-def-pi}
\end{align}
for all $s \in S, a \in A$.
The second term on the right side can be rewritten as follows according to \eqref{eq:P-star}, where by writing $\mu_{s'} \cdot \pi_{s'}$ we treat $\mu_{s'} = (\mu_{s'}(\theta))_{\theta \in \Theta}$ as a row vector and $\pi_{s'} = (\pi_{s'}(\theta, g))_{\theta \in \Theta, g \in G_A}$ as a matrix.
\begin{align}
\tilde{\gamma} \cdot \mathbb{E}_{(s',g') \sim P^\pi((s,g_a),a, \cdot)} V^\sigma(s',g')
=\ & \tilde{\gamma} \cdot \mathbb{E}_{s' \sim P(s,a, \cdot), g' \sim \mu_{s'} \cdot \pi_{s'}} V^\sigma(s',g') \nonumber \\
=\ & \tilde{\gamma} \cdot \mathbb{E}_{s' \sim P(s,a, \cdot)} U(s').
\label{eq:V-sigma-def-rewrite-U}
\end{align}
Hence, we have 
\begin{align*}
U(s) 
&= \mathbb{E}_{g_a \sim \mu_s \cdot \pi_s} V^\sigma(s,g_a) \\
&= \mathbb{E}_{g_a \sim \mu_s \cdot \pi_s} \left( \widetilde{R}^\pi((s,g_a), a) + \tilde{\gamma} \cdot \mathbb{E}_{s' \sim P(s,a, \cdot)} U(s') \right) && \text{(plug in \eqref{eq:V-sigma-def-pi} and \eqref{eq:V-sigma-def-rewrite-U})}\\
&= \mathbb{E}_{g_a \sim \mu_s \cdot \pi_s} \widetilde{R}^\pi((s,g_a), a) + \tilde{\gamma} \cdot \mathbb{E}_{g_a \sim \mu_s \cdot \pi_s} \mathbb{E}_{s' \sim P(s,a, \cdot)} U(s').
\end{align*}
We can write the above equation for all $s \in S$ more concisely as
\begin{equation}
\label{eq:U-concise}
U = X + \tilde{\gamma} \cdot T \cdot U,
\end{equation}
where we treat $U = (U(s))_{s\in S}$ and $X = (X(s))_{s\in S}$ as two column vectors, with
$X(s) = \mathbb{E}_{g_a \sim \mu_s \cdot \pi_s} \widetilde{R}^\pi((s,g_a), a)$;
and $T$ is a $|S|$-by-$|S|$ matrix with
\[
T(s, s') = \sum_{\theta \in \Theta} \mu_s(\theta) \sum_{g_a \in G_A} \pi_s(\theta, g_a) \cdot P(s, a, s').
\]

Next, we will show that the following equation holds to complete the proof 
\begin{equation}
\label{eq:V-ge-X-TV}
\overline{V}(\cdot, g_0) \le X + \tilde{\gamma} \cdot T \cdot \overline{V}(\cdot, g_0),
\end{equation}
where $\overline{V}(\cdot, g_0) = \left( \overline{V}(s, g_0) \right)_{s\in S}$ is a column vector.
Indeed, once the above equation holds, we will have 
\[
U-\overline{V}(\cdot, g_0) \ge \tilde{\gamma} \cdot T \cdot \left ( U - \overline{V}(\cdot, g_0)  \right)
\]
by subtracting it from \eqref{eq:U-concise}.
Since all entries of $\tilde{\gamma} \cdot T$ are non-negative, plugging this inequality to itself repeatedly $n$ times gives
\begin{align*}
U-\overline{V}(\cdot, g_0)
&\ge \tilde{\gamma} \cdot T \cdot \left ( U - \overline{V}(\cdot, g_0)  \right) \\
&\ge \tilde{\gamma} \cdot T \cdot \left ( \tilde{\gamma} \cdot T \cdot \left ( U - \overline{V}(\cdot, g_0)  \right)  \right) \\
%&\ge \tilde{\gamma} \cdot T \cdot \left ( \tilde{\gamma} \cdot T \cdot \left ( \tilde{\gamma} \cdot T \cdot \left ( U - \overline{V}(\cdot, g_0)  \right)  \right)  \right) \\
&\dots\\
&\ge \tilde{\gamma}^n \cdot T^n \cdot \left ( U - \overline{V}(\cdot, g_0)  \right).
\end{align*}
Note that all the entries of $T^n$ are in $[0,1]$. Hence, when $n \to \infty$, the right side converges to a zero vector, which implies that $U(s) -\overline{V}(s, g_0) \ge 0$ for all $s \in S$ and hence, the desired result.

Now we show that \eqref{eq:V-ge-X-TV} holds to complete the proof, i.e., 
for all $s \in S$,
\begin{equation}
\overline{V}(s, g_0) \le X(s) + \tilde{\gamma} \cdot \sum_{s'\in S} T(s,s') \cdot \overline{V}(s, g_0).
%\tag{$*$}
\label{eq:V-ge-X-TV-var}
\end{equation}
Expanding the right side, we obtain the following transitions, where the expectation is taken over $g_a \sim \mu_s \cdot \pi_s$: 
\begin{align}
& \mathbb{E} \left[ \widetilde{R}^\pi \left((s,g_{a}),a \right) + \tilde{\gamma} \cdot \mathbb{E}_{s' \sim P(s, a, \cdot)} \overline{V}(s', g_0) \right] \nonumber  \\
=\ & \mathbb{E} \left[ \mathbb{E}_{\theta \sim \Pr(\cdot|g_{a}, \pi_{s})} \widetilde{R}\left(s, \theta, a\right) + \tilde{\gamma} \cdot \mathbb{E}_{s' \sim P(s, a, \cdot)} \overline{V}(s', g_0) \right] && \text{(by \eqref{eq:tR-star})} \nonumber  \\
=\ & \mathbb{E}\left[ \mathbb{E}_{\theta \sim \Pr(\cdot|g_{a}, \pi_{s})} \left(  \widetilde{R}\left(s, \theta, a\right) + \tilde{\gamma} \cdot \mathbb{E}_{s' \sim P(s, a, \cdot)} \overline{V}(s', g_0) \right) \right]\nonumber  \\
%= \text{(\#)},
=\ & \mathbb{E}\left[ \mathbb{E}_{\theta \sim \Pr(\cdot|g_{a}, \pi_{s})} \widetilde{R}^+\left(s, \theta, a\right) \right] && \text{(by \eqref{eq:tR-p})} \nonumber  \\
\ge\ &  \mathbb{E}\left[ \mathbb{E}_{\theta \sim \Pr(\cdot|g_{a}, \pi_{s})} \widetilde{R}^+\left(s, \theta, b\right) \right], \nonumber %\label{eq:T-oV-part-1}
\end{align}
for all $b \in A$.
The last transition holds as $\pi$ is IC with respect to $\widetilde{R}^+$ (so we have \eqref{eq:ic} with $\widetilde{R}^+$ in place of the reward function).
In particular, this holds for  
\[
b^* \in \arg\max_{b \in A} \left( \mathbb{E}_{\theta \sim \mu_s}  \widetilde{R}(s,\theta,b) + \tilde{\gamma} \cdot \mathbb{E}_{s' \sim P(s,b, \cdot)} \overline{V}(s',g_0) \right),
\]
so expanding $\widetilde{R}^+$ according to \eqref{eq:tR-p} gives 
\begin{align}
&\mathbb{E}\left[ \mathbb{E}_{\theta \sim \Pr(\cdot|g_{a}, \pi_{s})} \widetilde{R}^+(s, \theta, b^*) \right] \nonumber \\
=\ & \mathbb{E}\left[ \mathbb{E}_{\theta \sim \Pr(\cdot|g_{a}, \pi_{s})} \left( \widetilde{R}(s, \theta, b^*) + \tilde{\gamma} \cdot \mathbb{E}_{ s' \sim P(s,b^*,\cdot)} \overline{V}(s', g_0) \right) \right] \nonumber \\
=\ & \mathbb{E}\left[ \mathbb{E}_{\theta \sim \Pr(\cdot|g_{a}, \pi_{s})} \widetilde{R}(s, \theta, b^*) \right] + \tilde{\gamma} \cdot \mathbb{E}_{ s' \sim P(s,b^*,\cdot)} \overline{V}(s', g_0).
\label{eq:lmm-V-sigma-tV-star}
%\tag{$*$}
\end{align}
The first term above can be rewritten as follows:
\begin{align*}
\mathbb{E}_{g_{a} \sim \mu_{s} \cdot \pi_{s}}  \mathbb{E}_{\theta \sim \Pr(\cdot|g_{a}, \pi_{s})} \widetilde{R}\left(s, \theta, b^* \right) 
= \mathbb{E}_{\theta\sim \mu_{s}} \widetilde{R}\left(s, \theta, b^* \right)
\end{align*}
as the marginal distribution of $\theta$ is exactly the prior distribution.

Hence, continuing \eqref{eq:lmm-V-sigma-tV-star}, we have
\begin{align*}
%\text{\eqref{eq:lmm-V-sigma-tV-star}} 
&\mathbb{E}\left[ \mathbb{E}_{\theta \sim \Pr(\cdot|g_{a}, \pi_{s})} \widetilde{R}^+(s, \theta, b^*) \right] \\
=\ & \mathbb{E}_{\theta \sim \mu_s}  \widetilde{R}(s,\theta,b^*) + \tilde{\gamma} \cdot \mathbb{E}_{s' \sim P(s,b^*, \cdot)} \overline{V}(s',g_0) \\
=\ & \overline{V}(s, g_0), && \text{(by \eqref{eq:bellman-tV-g0})}
\end{align*}
so \eqref{eq:V-ge-X-TV-var} holds and this completes the proof.
\end{proof}

% \begin{itemize}
% \item 
Since $\sigma(s,g_a) = a$, we have 
\begin{align*}
V^\sigma(s,g_a) 
&= \widetilde{R}^\varpi((s,g_a), a) + \tilde{\gamma} \cdot \mathbb{E}_{(s',g') \sim P^\varpi((s,g_a),a, \cdot)} V^\sigma(s',g') \\
&= \widetilde{R}^\varpi((s,g_a), a) + \tilde{\gamma} \cdot \mathbb{E}_{(s',g') \sim P^\pi((s,g_a),a, \cdot)} V^\sigma(s',g') && \text{(by \eqref{eq:thm-varpi-P-varpi-a})} \\
&= \widetilde{R}^\varpi((s,g_a), a) + \tilde{\gamma} \cdot \mathbb{E}_{s' \sim P(s, a, s')} \mathbb{E}_{\theta' \sim \mu_{s'}, g'\sim \pi_{s'}(\theta') } V^\sigma(s',g') && \text{(by \eqref{eq:P-star})}\\
&\ge \widetilde{R}^\varpi((s,g_a), a) + \tilde{\gamma} \cdot \mathbb{E}_{s' \sim P(s, a, s')} \overline{V}(s',g_0) && \text{(by Lemma~\ref{lmm:V-sigma-tV})}
\label{eq:V-sigma-s-g-mid}
\tag{$*$}
\end{align*}
For ease of description, fix $s$ and $a$ and let 
\[
\Phi(b) := \widetilde{R}^\varpi((s,g_a), b) + \tilde{\gamma} \cdot \mathbb{E}_{s' \sim P(s, b, s')} \overline{V}(s',g_0).
\]
We have 
\begin{align*}
\Phi(b) 
%&= \mathbb{E}_{\theta\sim \Pr(\cdot|g_a,\pi_s)} \widetilde{R}(s,\theta,b) + \tilde{\gamma} \cdot \mathbb{E}_{s' \sim P(s, b, s')} \overline{V}(s',g_0) \\
&= \mathbb{E}_{\theta\sim \Pr(\cdot|g_a,\pi_s)} \left( \widetilde{R}(s,\theta,b) + \tilde{\gamma} \cdot \mathbb{E}_{s' \sim P(s, b, s')} \overline{V}(s',g_0) \right) && \text{(by \eqref{eq:thm-varpi-tR-varpi} and \eqref{eq:tR-star})}\\
&= \mathbb{E}_{\theta\sim \Pr(\cdot|g_a,\pi_s)} \widetilde{R}^+(s,\theta, b).
\end{align*}
By definition, $\pi$ is IC with respect to $\widetilde{R}^+$. Thus, \eqref{eq:ic} holds (with respect to $\widetilde{R}^+$), and this implies that $\Phi(a) \ge \Phi(b)$.
Hence,
\begin{align*}
\text{\eqref{eq:V-sigma-s-g-mid}}
&\ge \widetilde{R}^\varpi((s,g_a), b) + \tilde{\gamma} \cdot \mathbb{E}_{s' \sim P(s, b, s')} \overline{V}(s',g_0) \\
&= \widetilde{R}^\varpi((s,g_a), b) + \tilde{\gamma} \cdot \mathbb{E}_{(s',g') \sim P^\varpi((s,g_a),b, \cdot)} {V}^\sigma(s',g')
\end{align*}
for all $b\in A \setminus \{a\}$, where the last transition is due to the fact that, given signal $g_a$, the state will only transition to the ones in the form $(s', g_0)$ after an action $b\neq a$ is taken.
%(now that $g = g_a$).
Hence, \eqref{eq:V-sigma-bellman} holds.

% \item

\paragraph{Case 2.}
$g = g_0$.

% To show that $\sigma$ is optimal, we proceed by analyzing V-values of the remaining states $S \times \{g_0\}$.  
Note that once the principal starts to send signal $g_0$, the remaining process stays in the subset $S \times \{g_0\}$ of meta-states. The MDP defined on this subset of meta-states is equivalent to $\mathcal{M}^\pinothing$.
%, i.e., the one the agent faces when principal's signaling is completely uninformative.
By definition, $\sigma$ prescribes the same action for each state in this subset as $\bar{\sigma}$ does, so we have $V^\sigma (s,g_0) = \overline{V}(s, g_0)$ for all $s \in S$. Consequently,
\begin{align*}
V^\sigma (s,g_0) & = \overline{V}(s, g_0)
= \max_{b \in A} \left( \mathbb{E}_{\theta \sim \mu_s}  \widetilde{R}(s,\theta,b) + \tilde{\gamma} \cdot \mathbb{E}_{s' \sim P(s,b, \cdot)} \overline{V}(s',g_0) \right) &\text{(by \eqref{eq:bellman-tV-g0})}\\
&= \max_{b \in A} \left( \widetilde{R}^\varpi((s,g_0), b) + \tilde{\gamma} \cdot \mathbb{E}_{(s',g') \sim P^\varpi((s,g),b, \cdot)} \overline{V}(s',g') \right),
%&\ge \widetilde{R}^\varpi((s,g), b) + \tilde{\gamma} \cdot \mathbb{E}_{(s',g') \sim P^\varpi((s,g),b, \cdot)} \overline{V}(s',g')
\end{align*}
where the last transition is due to \eqref{eq:thm-varpi-tR-varpi} and \eqref{eq:thm-varpi-P-varpi-c}; note that $\widetilde{R}^{\varpi}((s,g_0), b) = \widetilde{R}^{\pinothing}((s,g_0), b) = \mathbb{E}_{\theta \sim \mu_s} \widetilde{R}(s,\theta, b)$ as the posterior $\Pr(\cdot|g,\pinothing)$ degenerates to the prior $\mu_s(\cdot)$. 
Hence, \eqref{eq:V-sigma-bellman} holds, too, which completes the proof.
%\end{itemize}

\subsection{Proof of Corollary~\ref{crl:varpi-IC}}

According to Theorem~\ref{thm:varpi-IC}, $\varpi$ incentivizes an FS agent to take all advised actions. Hence, state transition will only happen among the states in $S\times G_A$. Any trajectory generated by using $\varpi$ will be generated with the same probability that it is generated by using $\pi$. The principal obtains the same cumulative reward, as a result.

\section{Generality of IC Action Advices}
\label{sc:IC-action-advice-wlog}

We show that even when the agent is FS, it is without loss of generality to consider only IC action advices. 
This will immediately imply the generality of IC action advices in the myopic setting (which is equivalent to the FS setting with $\tilde{\gamma} = 0$).

Specifically, let $\pi$ be an arbitrary signaling strategy of the principal, and let $\sigma:S\times G \to A$ be a best response of the agent to $\pi$, i.e., an optimal policy in $\mathcal{M}^\pi = \left\langle S\times G, A, P^\pi, \widetilde{R}^\pi \right\rangle$ (defined in Section~\ref{sc:opt-fs}). 
We construct an action advice $\pi^\star$: for each $s, \theta, a$, we set
\[
\pi_s^\star(\theta, g_a) = \sum_{g \in G:\ \sigma(s,g) = a} \pi_s(\theta, g)
\]
(Without loss of generality, we assume that $\pi_s(\theta)$ is supported on a finite set $G$ of signals; if $G$ is infinite, we can change the summation above to integration.)
We prove the following proposition.

\begin{proposition}
Assume that the agent breaks ties by taking the action advised by $\pi^\star$ when there are multiple optimal actions. 
%There exists an IC action advice $\pi^\star$, such that 
Then $\pi^\star$ is IC, and the agent's best response to $\pi^\star$ yields as much cumulative reward in $\mathcal{M}^{\pi^\star} = \left\langle S\times G, A, P^{\pi^\star}, \widetilde{R}^{\pi^\star} \right\rangle$ as $\sigma$ does in $\mathcal{M}^\pi$, both for the agent and the principal.
\end{proposition}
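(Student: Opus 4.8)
The plan is to view $\pi^\star$ as an ``aggregation'' of $\pi$ along the best response $\sigma$, and to transport optimality and value from $\mathcal{M}^\pi$ to $\mathcal{M}^{\pi^\star}$ by exhibiting a candidate value function. Write $\sigma^\star$ for the follow-advice policy in $\mathcal{M}^{\pi^\star}$, i.e.\ $\sigma^\star(s,g_a)=a$; let $V^\pi_\sigma$ denote the value function of $\sigma$ in $\mathcal{M}^\pi$; and for each $\mathcal{M}$-state $s$ set $W(s):=\mathbb{E}_{\theta\sim\mu_s,\,g\sim\pi_s(\theta)}V^\pi_\sigma(s,g)$, the agent's expected value just after entering $s$ (before seeing the signal). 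Two elementary identities, obtained by expanding \eqref{eq:posterior} and the definition of $\pi^\star$, drive everything: the marginal signal probability is exactly the aggregation weight, $\Pr(g_a\mid\pi^\star_s)=\sum_{g:\sigma(s,g)=a}\Pr(g\mid\pi_s)$, and the posterior induced by $g_a$ under $\pi^\star$ is the corresponding convex combination of posteriors, $\Pr(\theta\mid g_a,\pi^\star_s)=\sum_{g:\sigma(s,g)=a}\hat w_g\,\Pr(\theta\mid g,\pi_s)$ with $\hat w_g:=\Pr(g\mid\pi_s)/\Pr(g_a\mid\pi^\star_s)$ (if $\Pr(g_a\mid\pi^\star_s)=0$ the meta-state $(s,g_a)$ is unreachable and we pick any posterior keeping $a$ optimal, which is harmless). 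Hence, by \eqref{eq:tR-star}, $\widetilde{R}^{\pi^\star}((s,g_a),b)=\sum_{g:\sigma(s,g)=a}\hat w_g\,\widetilde{R}^\pi((s,g),b)$ for every $b\in A$, and by \eqref{eq:P-star} the transition out of $(s,g_a)$ under $b$ lands in next $\mathcal{M}$-state $s'$ with probability $P(s,b,s')$ and then spreads over signals exactly as $\pi^\star_{s'}$ does.

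\textbf{Reward preservation.}
Both $(\pi,\sigma)$ in $\mathcal{M}^\pi$ and $(\pi^\star,\sigma^\star)$ in $\mathcal{M}^{\pi^\star}$ induce, conditionally on being in $\mathcal{M}$-state $s$ and independently of the past, the same joint law of the realized pair $(\theta,a)$, namely $\phi_s(\theta,a)=\mu_s(\theta)\,\pi^\star_s(\theta,g_a)=\sum_{g:\sigma(s,g)=a}\mu_s(\theta)\pi_s(\theta,g)$; and in both processes the next $\mathcal{M}$-state given $(s,a)$ is $P(s,a,\cdot)$. A straightforward induction then shows that the joint law of the trajectory $(s_t,\theta_t,a_t)_{t\ge 0}$ of $\mathcal{M}$-states, parameters and actions is identical under the two (strategy, best-response) pairs, for every initial distribution $\mathbf{z}$. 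Since both players' objectives are expected discounted sums of functions of $(s_t,\theta_t,a_t)$, they coincide; in particular the agent's best-response value and the principal's value are preserved. (This also shows $W$ is the unique fixed point of $W(s)=\mathbb{E}_{(\theta,a)\sim\phi_s}[\widetilde{R}(s,\theta,a)+\tilde{\gamma}\,\mathbb{E}_{s'\sim P(s,a,\cdot)}W(s')]$.)

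\textbf{Incentive compatibility, and the main obstacle.}
It remains to show $\sigma^\star$ is optimal in $\mathcal{M}^{\pi^\star}$. Define the candidate $\hat V(s,g_a):=\sum_{g:\sigma(s,g)=a}\hat w_g\,V^\pi_\sigma(s,g)$. Using $V^\pi_\sigma(s,g)=\widetilde{R}^\pi((s,g),\sigma(s,g))+\tilde{\gamma}\sum_{s'}P(s,\sigma(s,g),s')W(s')$ (which is \eqref{eq:P-star}, \eqref{eq:tR-star} together with $\sum_{g'}\Pr(g'\mid\pi_{s'})V^\pi_\sigma(s',g')=W(s')$) and the two identities from the first paragraph, one verifies directly that $\hat V$ satisfies the policy Bellman equations of $\sigma^\star$ in $\mathcal{M}^{\pi^\star}$; by uniqueness $\hat V=V^{\pi^\star}_{\sigma^\star}$. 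Now fix $(s,g_a)$ and any $b\in A$. By the reward/transition identities, $\widetilde{R}^{\pi^\star}((s,g_a),b)+\tilde{\gamma}\,\mathbb{E}_{(s',g')\sim P^{\pi^\star}((s,g_a),b,\cdot)}\hat V(s',g')$ equals $\sum_{g:\sigma(s,g)=a}\hat w_g\big(\widetilde{R}^\pi((s,g),b)+\tilde{\gamma}\sum_{s'}P(s,b,s')W(s')\big)=\sum_{g:\sigma(s,g)=a}\hat w_g\,Q^\pi_\sigma((s,g),b)$, where $Q^\pi_\sigma$ is the state-action value of $\sigma$ in $\mathcal{M}^\pi$. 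Since $\sigma$ is optimal, $Q^\pi_\sigma((s,g),b)\le V^\pi_\sigma(s,g)$ for all $b$, with equality when $b=\sigma(s,g)=a$; averaging with weights $\hat w_g$ gives that the displayed quantity is $\le\hat V(s,g_a)$, with equality at $b=a$. Hence $\hat V$ satisfies the Bellman optimality equation and $\sigma^\star$ is greedy with respect to it, so $\sigma^\star$ is optimal, i.e.\ $\pi^\star$ is IC; the stated tie-breaking rule makes the agent actually play $\sigma^\star$, and the reward claim follows from the second paragraph. The only real work is this last step—specifically the bookkeeping that identifies $V^{\pi^\star}_{\sigma^\star}$ with the weighted average $\hat V$ of the $V^\pi_\sigma(s,g)$'s; once the posterior-mixture identity and the matching of transition kernels are established, the domination $Q^\pi_\sigma\le V^\pi_\sigma$ transfers verbatim, and the reward-preservation half is routine.
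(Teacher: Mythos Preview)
Your proof is correct and essentially self-contained, but it takes a different route from the paper's. The paper argues in two steps: (i) since $\pi^\star$ is obtained from $\pi$ by merging signals, it is a garbling of $\pi$, so \emph{any} policy $\sigma'$ in $\mathcal{M}^{\pi^\star}$ satisfies $u(\mathcal{M}^{\pi^\star},\sigma')\le u(\mathcal{M}^\pi,\sigma)$; and (ii) the follow-advice policy $\sigma^\star$ attains this upper bound because the joint distribution of $(\theta,a)$ in each state $s$ is the same under $(\pi^\star,\sigma^\star)$ as under $(\pi,\sigma)$. Incentive compatibility and reward preservation then follow at once.

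You instead build the value function $\hat V$ of $\sigma^\star$ explicitly as a $\hat w_g$-weighted average of $V^\pi_\sigma(s,g)$ over signals $g$ with $\sigma(s,g)=a$, and verify the Bellman optimality equation directly using the domination $Q^\pi_\sigma\le V^\pi_\sigma$. This avoids the Blackwell-informativeness step entirely, at the cost of somewhat more bookkeeping (the posterior-mixture identity and the matching of continuation values through $W(s')$). Your approach is more constructive and arguably more transparent about \emph{why} IC holds---namely, because the one-step deviation inequality for $\sigma$ in $\mathcal{M}^\pi$ averages to the one-step deviation inequality for $\sigma^\star$ in $\mathcal{M}^{\pi^\star}$---whereas the paper's garbling argument is shorter but leaves the ``not more informative $\Rightarrow$ no better policy'' step implicit. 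Your reward-preservation paragraph is essentially the same as the paper's step (ii).
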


\begin{proof}
%For ease of description, let $\mathcal{M}^\star = \mathcal{M}^{\pi^\star}$ throughout this proof.
We denote by $u(\mathcal{M}^{\pi}, \sigma')$ the agent's payoff (cumulative reward) for executing some policy $\sigma'$ in $\mathcal{M}^{\pi}$.
%; likewise, we use the notation $u(\mathcal{M}^\pi, \sigma')$. 
We first show that $\pi^\star$ is IC, i.e., the following policy $\sigma^\star: S \times G_A \to A$ is an optimal policy in $\mathcal{M}^{\pi^\star}$: $\sigma^\star(s, g_a) = a$ for all $s \in S, a \in A$.
%the agent will be incentivized to always take the action advised by $\pi^\star$.
Note that, by construction, $\pi^\star$ merges signals used by $\pi$, so it is not more informative than $\pi$. Hence, $u(\mathcal{M}^{\pi^\star}, \sigma') \le u(\mathcal{M}^\pi, \sigma)$ for any policy $\sigma'$, so to prove that $\pi^\star$ is IC it suffices to show that $u(\mathcal{M}^{\pi^\star}, \sigma^\star) = u(\mathcal{M}^\pi, \sigma)$.

Indeed, suppose that the agent uses $\sigma^\star$ in $\mathcal{M}^{\pi^\star}$. 
Conditioned on the environment being in state $s$, with probability $\mu_s(\theta) \cdot \pi_s^\star(\theta, g_a)$ the agent takes action $a$ while the realized external parameter is $\theta$.
By construction we have $\pi_s^\star(\theta, g_a) = \sum_{g \in G:\ \sigma(s,g) = a} \pi_s(\theta, g)$. Hence, 
\[
\mu_s(\theta) \cdot \pi_s^\star(\theta, g_a) = \sum_{g \in G:\ \sigma(s,g) = a} \mu_s(\theta) \cdot \pi_s(\theta, g).
\]
The right side is exactly the conditional probability of the pair $(a,\theta)$ when the agent uses $\sigma$ in $\mathcal{M}^\pi$.
As a result, in each state $s \in S$, every pair $(\theta, a) \in \Theta \times A$ arises with the same probability in these two situations. 
Since the state transition and rewards incurred in the original MDP $\mathcal{M}$ depend only on the external parameter and the action taken in each state, the agent's cumulative reward is also the same in expectation in these two situations, and we have $u(\mathcal{M}^{\pi^\star}, \sigma^\star) = u(\mathcal{M}^\pi, \sigma)$.
For the same reason, this also holds for the principal's cumulative reward.

Therefore, $\pi^\star$ is IC, whereby we also showed that the agent's best response policy yields the same cumulative rewards
for both the agent and the principal.
\end{proof}

\section{Dual LP Formulation for \textsc{OptSigMyop}}

\label{sc:opt-myop-dual-LP}
% ===================
% -- Proof by Dual --
% ===================
Note that we can rewrite \eqref{eq:opt-myop-cons-1} as 
\begin{align*}
V(s) \ge \max_{\mathbf{x} \in \mathcal{A}_s} \left[ R^*(s, \mathbf{x}) + \gamma \cdot \sum_{s' \in S} P^*(s, \mathbf{x}, s') \cdot V(s') \right] && \text{for all } s \in S
\end{align*}
Consider the optimization problem on the right side.
%of \eqref{eq:bellman}.
We can replace $\mathbf{x}$ with $\pi$ by using 
\[x(\theta, a) = \mu_s(\theta) \cdot \pi_s(\theta, g_a).\] 
This results in the following LP, where $\pi_s(\theta, a)$ are the variables and the constraints ensure that $\pi$ is an IC action advice.
%(for all $s \in S, \theta \in \Theta, a \in A$).
\begin{subequations}
\label{lp:max-x-in-As}
\begin{align}
\text{maximize:} \quad
&\sum_{\theta \in \Theta} \sum_{a \in A} \left( R(s, \theta, a) + \gamma \sum_{s' \in S} P(s, a, s') \cdot V(s') \right) \cdot \mu_s(\theta) \cdot \pi_s(\theta, g_a) \hspace{-20mm} \tag{\ref{lp:max-x-in-As}} \\
\text{subject to:} \quad
%&x(\theta, a) = \mu_s(\theta) \cdot \pi_s(\theta, g_a) && \text{for all } \theta \in \Theta, a \in  A \label{eq:max-x-in-As-cons-1}\\[2mm]
&\sum_{\theta\in \Theta} \mu_s(\theta) \cdot \pi_s(\theta, g_a) \cdot \widetilde{R}(s,\theta,a) \ge \sum_{\theta\in \Theta} \mu_s(\theta) \cdot \pi_s(\theta, g_a) \cdot \widetilde{R}(s,\theta,b) \hspace{-20mm} \nonumber \\[-2mm]
& && \text{for all } a,b \in A \label{eq:max-x-in-As-cons-2}\\[2mm]
&\sum_{a \in A} \pi_s(\theta, g_a) = 1 && \text{for all } \theta \in \Theta \label{eq:max-x-in-As-cons-3}\\
&\pi_s(\theta, g_a) \ge 0 && \text{for all } \theta \in \Theta,  a \in A \label{eq:max-x-in-As-cons-4}
\end{align}
\end{subequations}
We can further write the above LP in its dual form, which gives the following dual LP, where $I_s(a,b)$, $J_s(\theta)$, and $K_s(a,\theta)$ are the dual variables corresponding to Constraints \eqref{eq:max-x-in-As-cons-2}, \eqref{eq:max-x-in-As-cons-3}, and \eqref{eq:max-x-in-As-cons-4}, respectively.
\begin{subequations}
\label{lp:dual-lp}
\begin{align}
\text{minimize:} \quad
& \sum_{\theta \in \Theta} J_s(\theta) \tag{\ref{lp:dual-lp}} \\
\text{subject to:} \quad
& \left( R(s, \theta, a) + \gamma \sum_{s' \in S} P(s, a, s') \cdot V(s') \right) \cdot \mu_s(\theta) \nonumber \hspace{-10mm} \\
&\qquad = \left( \widetilde{R}(s, \theta, b) - \widetilde{R}(s, \theta, a) \right) \cdot \mu_s(\theta) \cdot I_s(a,b) + J_s(\theta) - K(a, \theta)\hspace{-50mm} \nonumber  \\
& && \text{for all } \theta \in \Theta, a,b \in  A \label{lp:dual-lp-cons-1} \\[2mm]
& I_s(a,b) \ge 0 && \text{for all } a, b \in A \label{lp:dual-lp-cons-2}\\
& K_s(a, \theta) \ge 0 && \text{for all } \theta \in \Theta, a, b \in A \label{lp:dual-lp-cons-3}
\end{align}
\end{subequations}
Since the optimal objective value of the dual LP is equal to that of the original problem, we obtain the following new formulation for \eqref{lp:opt-myop}, where the dual variables are included as additional variables.
\begin{align}
\text{minimize:} \quad &  \sum_{s \in S} z_s \cdot V(s) \label{lp:opt-myop-new}
\\
\text{subject to:} \quad
&V(s) \ge \sum_{\theta \in \Theta} J_s(\theta) && \text{for all } s \in S \tag{\ref{lp:opt-myop-new}a}
\label{eq:opt-myop-new-cons-1} \\
&\text{\eqref{lp:dual-lp-cons-1}--\eqref{lp:dual-lp-cons-3}} && \text{for all } s \in S \tag{\ref{lp:opt-myop-new}b}
\label{eq:opt-myop-new-cons-2} 
\end{align}
Note that in Constraint~\eqref{eq:opt-myop-new-cons-1} we do not require that the right side is minimized. 
%This will be implicitly enforced by the objective of minimizing $\sum_{s \in S} z_s \cdot V(s)$.
Indeed, for any feasible solution $(V, I_s, J_s, K_s)$ to \eqref{lp:opt-myop-new}, we have 
\[
V(s) \ge \sum_{\theta \in \Theta} J_s(\theta) \ge 
%\min_{\substack{I'_s, J'_s, k'_s\\[0.5mm]\text{s.t. \eqref{lp:dual-lp-cons-1}--\eqref{lp:dual-lp-cons-3}}}}\ \sum_{\theta \in \Theta} J'_s(\theta) 
\textsc{Opt-Dual}
= \max_{\mathbf{x} \in \mathcal{A}_s} \left[ R^*(s, \mathbf{x}) + \gamma \cdot \sum_{s' \in S} P^*(s, \mathbf{x}, s') \cdot V(s') \right],
\]
so $V$ is a feasible solution to the original LP \eqref{lp:opt-myop}, where \textsc{Opt-Dual} denotes the optimal value of \eqref{lp:dual-lp}.
Conversely, any feasible solution $V'$ to \eqref{lp:opt-myop}, corresponds to a feasible solution $(V', I'_s, J'_s, K'_s)$ to \eqref{lp:opt-myop-new}, where $(I'_s, J'_s, K'_s)$ is a solution to \eqref{lp:dual-lp} with $V$ in the coefficients fixed to $V'$; namely, we have
\[
V'(s) \ge \max_{\mathbf{x} \in \mathcal{A}_s} \left[ R^*(s, \mathbf{x}) + \gamma \cdot \sum_{s' \in S} P^*(s, \mathbf{x}, s') \cdot V'(s') \right] = \textsc{Opt-Dual} = \sum_{\theta \in \Theta} J'_s(\theta).
\]

% gamma_pr = gamma_ag = 0.8, n_termin = 0, Theta = 10, A = 10, S = 2, 4, ..., 20
\begin{filecontents*}{random_var_S.dat}
x	noSigMyop	noSigFS	optSigMyop	optSigAM	StdDev_noSigMyop	StdDev_noSigFS	StdDev_optSigMyop	StdDev_optSigAM
2	0.5312	0.5441	0.8931	0.8943	0.0866	0.0909	0.0338	0.0326
4	0.5419	0.5451	0.8915	0.8912	0.0781	0.0786	0.0208	0.0188
6	0.5721	0.5772	0.8858	0.8845	0.0349	0.0350	0.0219	0.0219
8	0.5461	0.5451	0.8856	0.8853	0.0328	0.0334	0.0158	0.0163
10	0.5538	0.5552	0.8817	0.8815	0.0348	0.0362	0.0179	0.0178
12	0.5442	0.5426	0.8830	0.8825	0.0330	0.0339	0.0122	0.0121
14	0.5536	0.5546	0.8879	0.8877	0.0298	0.0264	0.0160	0.0155
16	0.5595	0.5598	0.8842	0.8842	0.0278	0.0287	0.0150	0.0151
18	0.5501	0.5492	0.8831	0.8831	0.0270	0.0276	0.0127	0.0127
20	0.5470	0.5452	0.8858	0.8858	0.0254	0.0246	0.0113	0.0113
\end{filecontents*}

% gamma_pr = gamma_ag = 0.8, n_termin = 0, Theta = 10, A = 10, S = 2, 4, ..., 20
\begin{filecontents*}{random_var_n_termin.dat}
x	noSigMyop	noSigFS	optSigMyop	optSigAM	StdDev_noSigMyop	StdDev_noSigFS	StdDev_optSigMyop	StdDev_optSigAM
0	0.553831973	0.555206668	0.881701345	0.881502161	0.034847727	0.036212787	0.017919788	0.017798654
1	0.542007149	0.54088694	0.88357069	0.883987486	0.040599872	0.040924087	0.01485975	0.015006586
2	0.511110355	0.610552069	0.862183716	0.891419511	0.047899869	0.045956837	0.018715933	0.023410127
3	0.494618987	0.624812266	0.860749226	0.895563737	0.04438792	0.055194216	0.015885294	0.019681727
4	0.507965324	0.621297102	0.8593661	0.891562514	0.054056905	0.04941846	0.029517346	0.026215446
5	0.512364771	0.608112514	0.863227079	0.890950961	0.059424293	0.056130444	0.029510623	0.024817918
6	0.500771829	0.58526269	0.859329082	0.882442235	0.063787026	0.057742985	0.029086997	0.027964692
7	0.549934923	0.598755154	0.875366377	0.887468475	0.058802206	0.070519536	0.033072262	0.03105445
8	0.543924797	0.580441424	0.870888248	0.878791058	0.079776836	0.075000962	0.033205794	0.036577917
9	0.554893713	0.573405035	0.879145888	0.889392923	0.091676587	0.096616953	0.048744151	0.043520678
\end{filecontents*}

% gamma_pr=0.1,0.2,...,1.0, gamma_ag=0.8, n_termin = 5, Theta = 10, A = 10, S = 10
\begin{filecontents*}{random_var_gamma_pr.dat}
x	noSigMyop	noSigFS	optSigMyop	optSigAM	StdDev_noSigMyop	StdDev_noSigFS	StdDev_optSigMyop	StdDev_optSigAM
0.1	0.542817987	0.532614733	0.887398951	0.858322794	0.047746716	0.047030633	0.018345574	0.030063817
0.2	0.541511555	0.564548845	0.882466671	0.860145065	0.050008852	0.048722436	0.023460899	0.033660571
0.3	0.541320211	0.542940216	0.887821687	0.866287125	0.047811471	0.058081721	0.023566897	0.033966321
0.4	0.535341241	0.568062792	0.883502554	0.869680272	0.033157384	0.047183041	0.011303637	0.024960651
0.5	0.552203184	0.570671727	0.87584847	0.865528387	0.060296727	0.042079713	0.026802218	0.036550045
0.6	0.534422831	0.588998627	0.873445454	0.87911784	0.050899228	0.043627358	0.017101178	0.018935009
0.7	0.505117251	0.573232618	0.870133169	0.881388915	0.046960878	0.047667407	0.023827303	0.031864905
0.8	0.516804336	0.602181824	0.862390425	0.888382086	0.061903017	0.05681739	0.027337395	0.024972572
0.9	0.521583509	0.632221604	0.860183482	0.893876342	0.047465639	0.059464504	0.028894305	0.019054988
\end{filecontents*}

% gamma_pr = 0.8, gamma_ag=0.1,0.2,...,1.0, n_termin = 5, Theta = 10, A = 10, S = 10
\begin{filecontents*}{random_var_gamma_ag.dat}
x	noSigMyop	noSigFS	optSigMyop	optSigAM	StdDev_noSigMyop	StdDev_noSigFS	StdDev_optSigMyop	StdDev_optSigAM
0.1	0.516040829	0.519856378	0.862379848	0.865618773	0.040634859	0.042581445	0.017036435	0.017089796
0.2	0.525489671	0.542592892	0.862299516	0.87077001	0.042141969	0.048384819	0.028488255	0.027448185
0.3	0.501410146	0.53516159	0.859857315	0.874180125	0.032997387	0.043443517	0.014787824	0.016570525
0.4	0.515569133	0.535514172	0.862861211	0.880493753	0.056887778	0.053443737	0.026078947	0.024498635
0.5	0.527808399	0.576722271	0.868661273	0.891143377	0.048137511	0.054858676	0.023363134	0.022682907
0.6	0.511897381	0.580622708	0.864681487	0.886096737	0.043469094	0.062283153	0.028780847	0.026102746
0.7	0.526031812	0.596796779	0.864059007	0.891985812	0.068273435	0.064386493	0.020397338	0.022331242
0.8	0.516804336	0.602181824	0.862390425	0.888382086	0.061903017	0.05681739	0.027337395	0.024972572
0.9	0.50022126	0.623946925	0.858082682	0.882197394	0.053283307	0.049788754	0.019320352	0.023222594
\end{filecontents*}

\begin{NoHyper}
\begin{figure}
\centering
% \begin{tabular}{l@{\hspace{-7mm}}l@{\hspace{-7mm}}l@{\hspace{-7mm}}l@{\hspace{-7mm}}l}
\begin{tabular}{l@{\hspace{-9mm}}l@{\hspace{-9mm}}l@{\hspace{-9mm}}l}
% ========================
\begin{myplot}
    {
        title={(a)\quad $\beta$},
        %xlabel={$\beta$},
        %yticklabels={\empty},
        %legend entries={ {\sc NoSig-myop}, {\sc NoSig-AM/FS}, {\sc OptSig-myop}, {\sc OptSig-AM (Threat-FS)} },%
 	    %legend to name=leg:all,
    }
    \addplotmyop{random_termin_5.dat}
\end{myplot}%\\[.5mm]
&
% ========================
\begin{myplot}
    {
        title={(b)\quad $\beta$},
        yticklabels={\empty},
    }
    \addplotfs{random_termin_5.dat}
\end{myplot}%
&
% ========================
\begin{myplot}
    {
        title={(c)\quad $|A|$},
        yticklabels={\empty},
        xmin=2, xmax=20,
        xtick={ 4, 8, 12, 16, 20},
        xticklabels={$4$,$8$,$12$,$16$,$20$},
    }
    \addplotmyop{random_var_A.dat}
\end{myplot}%\\[.5mm]
&
% ========================
\begin{myplot}
    {
        title={(d)\quad $|A|$},
        yticklabels={\empty},
        xmin=2, xmax=20,
        xtick={ 4, 8, 12, 16, 20},
        xticklabels={$4$,$8$,$12$,$16$,$20$},
    }
    \addplotfs{random_var_A.dat}
\end{myplot}%\\[.5mm]
\\[2mm]
% ========================
\begin{myplot}
    {
        title={(e)\quad $|\Theta|$},
        xmin=2, xmax=20,
        xtick={ 4, 8, 12, 16, 20},
        xticklabels={$4$,$8$,$12$,$16$,$20$},
    }
    \addplotmyop{random_var_Theta.dat}
\end{myplot}%\\[2mm]
&
% ========================
\begin{myplot}
    {
        title={(f)\quad $|\Theta|$},
        yticklabels={\empty},
        xmin=2, xmax=20,
        xtick={ 4, 8, 12, 16, 20},
        xticklabels={$4$,$8$,$12$,$16$,$20$},
    }
    \addplotfs{random_var_Theta.dat}
\end{myplot}
&
% ========================
\begin{myplot}
    {
        title={(g)\quad $|S|$ ($n^*=0$)},
        yticklabels={\empty},
        xmin=2, xmax=20,
        xtick={ 4, 8, 12, 16, 20},
        xticklabels={$4$,$8$,$12$,$16$,$20$},
    }
    \addplotmyop{random_var_S.dat}
\end{myplot}
&
% ========================
\begin{myplot}
    {
        title={(h)\quad $|S|$ ($n^*=0$)},
        yticklabels={\empty},
        xmin=2, xmax=20,
        xtick={ 4, 8, 12, 16, 20},
        xticklabels={$4$,$8$,$12$,$16$,$20$},
    }
    \addplotfs{random_var_S.dat}
\end{myplot}
\\[2mm]
% ========================
\begin{myplot}
    {
        title={(i)\quad $n^*$},
        xmin=0, xmax=9,
        xtick={0, 2, 4, 6, 8},
        xticklabels={$0$,$2$,$4$,$6$,$8$},
    }
    \addplotmyop{random_var_n_termin.dat}
\end{myplot}%\\[.5mm]
&
% ========================
\begin{myplot}
    {
        title={(j)\quad $n^*$},
        yticklabels={\empty},
        xmin=0, xmax=9,
        xtick={0, 2, 4, 6, 8},
        xticklabels={$0$,$2$,$4$,$6$,$8$},
    }
    \addplotfs{random_var_n_termin.dat}
\end{myplot}%\\[2mm]
&
% ========================
\begin{myplot}
    {
        title={(k)\quad $\gamma$},
        yticklabels={\empty},
        xmin=0.1, xmax=0.9,
        xtick={ .2, .4, .6, .8, 1.0},
        xticklabels={$.2$,$.4$,$.6$,$.8$, $1.0$},
    }
    \addplotmyop{random_var_gamma_pr.dat}
\end{myplot}
&
% ========================
\begin{myplot}
    {
        title={(l)\quad $\gamma$},
        yticklabels={\empty},
        xmin=0.1, xmax=0.9,
        xtick={ .2, .4, .6, .8, 1.0},
        xticklabels={$.2$,$.4$,$.6$,$.8$, $1.0$},
    }
    \addplotfs{random_var_gamma_pr.dat}
\end{myplot}
\\[2mm]
% ========================
\begin{myplot}
    {
        title={(m)\quad $\tilde{\gamma}$},
        xmin=0.1, xmax=0.9,
        xtick={ .2, .4, .6, .8, 1.0},
        xticklabels={$.2$,$.4$,$.6$,$.8$, $1.0$},
    }
    \addplotmyop{random_var_gamma_ag.dat}
\end{myplot}
&
% ========================
\begin{myplot}
    {
        title={(n)\quad $\tilde{\gamma}$},
        yticklabels={\empty},
        xmin=0.1, xmax=0.9,
        xtick={ .2, .4, .6, .8, 1.0},
        xticklabels={$.2$,$.4$,$.6$,$.8$, $1.0$},
    }
    \addplotfs{random_var_gamma_ag.dat}
\end{myplot}\\[2mm]
\end{tabular}
\ref{leg:all}
\caption{\small Comparison of signaling strategies: a full set of results including those in Figure~\ref{fig:exp} and additional ones. All results are shown as ratios to {\sc FullControl} on the y-axes. Meanings of x-axes are noted in the captions. Shaded areas represent standard deviations (mean $\pm$ standard deviation). 
In all figures, we fix $|S| = |\Theta| = |A| = 10$, $\gamma = \tilde{\gamma} = 0.8$, $n^*=5$, and $\beta=0$ unless they are variables.}
\label{fig:additional_exp}
\end{figure}
\end{NoHyper}

\begin{NoHyper}
\begin{figure*}
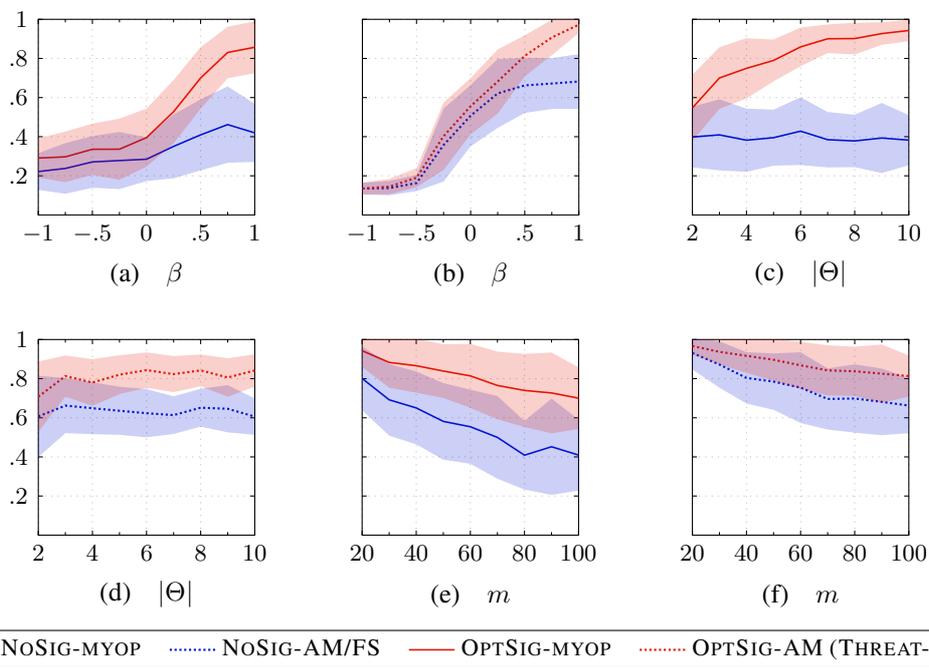

\centering
\begin{tabular}
% {@{\hspace{-9mm}}l@{\hspace{-8mm}}l@{\hspace{-9mm}}l@{\hspace{-9mm}}l@{\hspace{-9mm}}l@{\hspace{-9mm}}l}
{@{\hspace{0mm}}l@{\hspace{0mm}}l@{\hspace{0mm}}l}
% ========================
\begin{myplot}
    {
        title={(a)\quad $\beta$},
    }
    \addplotmyop{road_nav_beta_sorted.dat}
\end{myplot}%\\[.5mm]
&
\begin{myplot}
    {
        title={(b)\quad $\beta$},
        yticklabels={\empty},
    }
    \addplotfs{road_nav_beta_sorted.dat}
\end{myplot}
&
\begin{myplot}
    {
        title={(c)\quad $|\Theta|$},
        yticklabels={\empty},
        xmin=2, xmax=10,
        xtick={ 2, 4, 6, 8, 10},
        xticklabels={$2$, $4$, $6$, $8$, $10$},
    }
    \addplotmyop{road_nav_Theta_sorted.dat}
\end{myplot}
\\[2mm]
\begin{myplot}
    {
        title={(d)\quad $|\Theta|$},
        xmin=2, xmax=10,
        xtick={ 2, 4, 6, 8, 10},
        xticklabels={$2$, $4$, $6$, $8$, $10$},
    }
    \addplotfs{road_nav_Theta_sorted.dat}
\end{myplot}
&
\begin{myplot}
    {
        title={(e)\quad $m$},
        yticklabels={\empty},
        xmin=20, xmax=100,
        xtick={ 20, 40, 60, 80, 100},
        xticklabels={$20$, $40$, $60$, $80$, $100$},
    }
    \addplotmyop{road_nav_road_sorted.dat}
\end{myplot}
&
\begin{myplot}
    {
        title={(f)\quad $m$},
        yticklabels={\empty},
        xmin=20, xmax=100,
        xtick={ 20, 40, 60, 80, 100},
        xticklabels={$20$, $40$, $60$, $80$, $100$},
    }
    \addplotfs{road_nav_road_sorted.dat}
\end{myplot}
% ========================
\end{tabular}
\ref{leg:all}
\caption{\small Comparison of signaling strategies in a navigation application, with {\em a uniform congestion level} at each step. All other settings are the same as Figure~\ref{fig:exp_nav}.
}
\label{fig:additional_exp_nav_sorted}
\end{figure*}
\end{NoHyper}

\section{Additional Experiment Results}
\label{sc:additional_exp}

%\paragraph{Results Omitted in Figure~\ref{fig:exp}}

A full set of results obtained on general instances, including those presented in Figure~\ref{fig:exp} is shown in Figure~\ref{fig:additional_exp}.
For the navigation application, we also present an additional set of results in \ref{fig:additional_exp_nav_sorted}.
These results are obtained on instances with uniform congestion levels, i.e., all roads have the same congestion level at each step (however, the costs for the roads may still be different).
Despite this difference, the results exhibit very similar patterns to those shown in Figure~\ref{fig:exp_nav}.

\end{document}